\definecolor{darkgreen}{rgb}{.2,.8,0}
 \def\ua{\uparrow}
 \def\da{\downarrow}
 \def\wh{\widehat}
 \def\wt{\widetilde}
\def\bbar{\overline}
\def\bR{\mathbb{R}}
\def\cM{\mathcal M}
\def\cN{\mathcal N}
\def\cR{\mathcal R}
\def\bR{\mathbb R}
\def\cX{\mathcal X}
\def\cF{\mathcal F}
\def\bP{\mathbb P}
\def\bE{\mathbb E}
\def\bN{\mathbb N}
\def\eps{\varepsilon}
\newcommand{\avatr}{{\rm AV@R}}
\newcommand{\vatr}{{\rm V@R}}
\newcommand{\N}{\mathbb{N}}
\newcommand{\R}{\mathbb{R}}
\newcommand{\eins}{\mathbbm{1}}
\DeclareMathOperator{\esssup}{ess\,sup}
\newcommand{\pr}{\mathbb{P}}
\newcommand{\ex}{\mathbb{E}}
\numberwithin{equation}{section}
\newtheorem{theorem}{Theorem}[section]
\newtheorem{proposition}[theorem]{Proposition}
\newtheorem{lemma}[theorem]{Lemma}
\newtheorem{corollary}[theorem]{Corollary}
\theoremstyle{definition}
\newtheorem{definition}[theorem]{Definition}
\newtheorem{example}[theorem]{Example}
\newtheorem{remark}[theorem]{Remark}
\def\eins{{\mathchoice {1\mskip-4mu\mathrm l}
{1\mskip-4mu\mathrm l}{1\mskip-4.5mu\mathrm l}
{1\mskip-5mu\mathrm l}}}
\renewcommand{\baselinestretch}{1.0}\normalsize
\begin{document}
\title{\LARGE\bf Comparative and qualitative robustness for\\ law-invariant risk measures}
\author{
Volker Kr\"atschmer\footnote{Faculty of Mathematics, University of Duisburg-Essen, {\tt volker.kraetschmer@uni-due.de}} {\setcounter{footnote}{2}}  \qquad
Alexander Schied\footnote{Department of Mathematics, University of Mannheim, {\tt schied@uni-mannheim.de}} {\setcounter{footnote}{6}}  \qquad
Henryk Z\"ahle\footnote{Department of Mathematics,  Saarland University, {\tt zaehle@math.uni-sb.de}}
}

\date{~
}

\maketitle

\begin{abstract}
When estimating the risk of a P\&L from historical data or Monte Carlo simulation, the robustness of the estimate is important. We argue here that Hampel's classical notion of qualitative robustness is not suitable for risk measurement and we propose and analyze a refined notion of robustness that applies to tail-dependent law-invariant convex risk measures on Orlicz spaces. This concept of robustness captures the tradeoff between robustness and sensitivity and can be quantified by  an index of qualitative robustness. By means of this index, we can compare various risk measures, such as distortion risk measures, in regard to their degree of robustness. Our analysis also yields results that are of independent interest such as continuity properties and consistency of estimators for risk measures, or a Skorohod representation theorem for $\psi$-weak convergence.

\medskip

\noindent {\bf MSC classification:} {62G35, 60B10, 60F05, 91B30, 28A33}\\
\parindent0pt
{\bf JEL Classification} D81
\end{abstract}


\section{Introduction}

Let $X$ denote the  P\&L of a financial position.
When assessing the risk of $X$   in terms of a monetary risk measure $\rho$ it is   common  to estimate  $\rho(X)$ by means of a Monte Carlo procedure or from a sequence of historical data. When $\rho$ is a law-invariant risk measure, a natural estimate for $\rho(X)$ is given by $\cR_\rho(\wh m)$, where $\wh m$ is the  empirical distribution of the data and $\cR_\rho$ is the functional defined  by
$$\cR_\rho(\mu)=\rho(X)\qquad\text{if $X$ has law $\mu$;}
$$ see, e.g.,   \cite{AcerbiTasche,BelomestnyKraetschmer,BeutnerZaehle2010,Contetal,Weber}.
In this context,  it was pointed out by Cont et al. \cite{Contetal} that  it is important to consider the \emph{robustness} of the  risk functional $\cR_\rho$.
 Informally, robustness refers here to a certain insensitivity of the sampling distribution with respect to deviations of $\widehat{m}$ from the \lq true\rq,  theoretical distribution. It  will especially yield a  stable behavior of the estimates when the estimation process is repeated periodically. Such a stable behavior  is particularly desirable when $\rho(X)$ serves in allocating the economic capital required from a large position $X$, since altering the capital allocation may be costly.

On a  mathematical level, Cont et al. \cite{Contetal} use Hampel's \cite{Hampel} classical concept of qualitative robustness, which, according  to Hampel's theorem, is essentially equivalent to the weak continuity of   $\cR_\rho$.  Consequently, it was pointed out in  \cite{Contetal} that no risk functional  $\cR_\rho$ that arises from a law-invariant coherent risk measure $\rho$
can satisfy   Hampel's requirement of qualitative robustness, not even if $\rho$ is simply the ordinary expectation of the loss.  The results in  \cite{Contetal} therefore seem to weigh heavily in favor of Value at Risk, since Value at Risk does essentially satisfy  Hampel's  notion of  robustness.

Our goal in this article is to point out that in risk measurement  the use of Hampel's classical concept of qualitative robustness may be problematic and to propose and analyze an alternative concept  based on \cite{Kraetschmer et al 2012}.
Let us start  by discussing two major drawbacks of Hampel's robustness in risk measurement.

First, two P\&Ls may have laws that are close with respect to the weak topology but still have completely different tail behavior. Qualitative robustness of $\cR_\rho$ therefore requires that $\rho$ is essentially insensitive to the tail behavior of a P\&L. In the recent years of financial crisis, it has become apparent, though, that a faulty assessment of  tail behavior
can lead to a dramatic underestimation of the corresponding risk.

Second, Hampel's robustness concept  creates a sharp  division of the class of law-invariant monetary risk measures into those for which $\cR_\rho$ is \lq robust\rq\ and those for which $\cR_\rho$ is \lq not robust\rq. The first class contains risk measures such as Value at Risk that are insensitive with respect to tail behavior of P\&Ls whereas the second class contains the ordinary expectation and all law-invariant coherent risk measures \cite{Contetal}. But, as we will see, the distinction between \lq robust\rq\  and \lq non-robust\rq\ risk measures is artificial because there is actually a full continuum of possible degrees of robustness beyond the classical concept. So labeling a risk measure as \lq robust\rq\  or \lq non-robust\rq\ may give a false impression.

In this article, we will analyze the robustness properties of law-invariant convex risk measures based on the refined notion of qualitative robustness that was proposed in \cite{Kraetschmer et al 2012} and is further developed here in  Section \ref{qualitative robustness}. Instead of a sharp division into \lq robust\rq\  and \lq non-robust\rq\ risk measures, this notion allows us to assign a degree of robustness to most risk measures and to compare different risk measures in regard to their degrees of robustness.  We thereby capture the natural tradeoff between robustness and tail sensitivity in risk measurement.
The degree of robustness can be expressed numerically by the \emph{index of qualitative robustness} proposed in \cite{Kraetschmer et al 2012}.  This index takes values in $[0,\infty]$, with the respective extremes $+\infty$ and $0$ corresponding to Hampel's robustness and to full tail sensitivity. Some of our main results will  show that a greater index of qualitative robustness implies greater robustness in a sense that is mathematically precise.  We will also show how our index can be computed for  distortion risk measures such as MINMAXVAR or Average Value at Risk (which is also called Expected Shortfall, Conditional Value at Risk, or TailVaR). We emphasize that the key to our refined notion of robustness lies in specifying a metric on a suitable space  of probability measures  for which the statistical functional associated with our risk measure becomes continuous. It follows from the results in \cite{Contetal} that such a metric must generate a topology that is \emph{finer} than the usual topology of weak convergence.


Since we are  interested in the way in which  convex risk measures depend on the tail of a P\&L,   it is  not sufficient to consider only bounded P\&Ls.  We therefore build on  the analysis of  Cheridito and Li \cite{CheriditoLi2009}, who observed that Orlicz spaces or Orlicz hearts are  appropriate domains for convex risk measures when P\&Ls are unbounded. We are particularly interested in the continuity properties of $\rho$ and its corresponding risk functional  $\cR_\rho$, and we find that an important role is played by the so-called $\Delta_2$-condition of the underlying Orlicz space.

Our analysis also yields some results that are of independent interest. For instance, we obtain a Skorohod representation theorem that links $\psi$-weak convergence of probability measures to norm convergence of random variables in Orlicz space.

Our article is organized as follows. In Section \ref{Consistency section} we prove the consistency of the estimator  $\cR_\rho(\wh m)$ for law-invariant convex risk measures and general stationary and ergodic data. In Section \ref{psi-weak topology and qualitative robustness} we analyze the continuity properties of $\cR_\rho$ and show that $\cR_\rho$ basically inherits the continuity of the original risk measures if and only if  the Orlicz space supporting $\rho$ satisfies the $\Delta_2$-condition. In Section \ref{qualitative robustness} we present our main results on the comparative and qualitative robustness of law-invariant convex risk measures $\rho$. In particular, we introduce our refined notion of robustness and the index of qualitative robustness, and we show that these notions are well-defined whenever the Orlicz space supporting $\rho$ satisfies the $\Delta_2$-condition. In Section \ref{distortion index} we show that our results can be easily applied to distortion risk measures.

In Section \ref{Robust section} we continue and strengthen the  robustness analysis for general statistical functionals started in \cite{Kraetschmer et al 2012}. In particular, we state stronger versions of our Hampel-type theorem and its converse than those given in \cite{Kraetschmer et al 2012}.  In Section \ref{Skorohod Section} we state and prove our above-mentioned Skorohod representation result. Most other proofs can be found in Section \ref{Proof section} and the appendix.


\section{Statement of main results}\label{main results section}

\subsection{Setup}\label{Setup section}

Let $(\Omega,\cF,\bP)$ be an atomless probability space and denote by $L^0:=L^0(\Omega,\cF,\bP)$ the usual class of all finitely-valued random variables modulo the equivalence relation of $\bP$-a.s.\ identity. Let $\cX\subset L^0$ be a vector space containing the constants. An element $X$ of $\cX$ will be interpreted as the P\&L of a financial position.  We will say that a map $\rho:\cX\to\bR$ is  a \emph{convex risk measure} when the following conditions are satisfied:
\begin{description}
\item[\rm (i)] monotonicity: $\rho(X)\ge\rho(Y)$ for $X$, $Y\in\cX$ with $X\le Y$;
\item[\rm (ii)] convexity:  $\rho(\lambda X+(1-\lambda) Y)\le\lambda\rho(X)+(1-\lambda)\rho(Y)$ for all $X,Y\in\cX$ and $\lambda\in[0,1]$;
\item[\rm (iii)] cash additivity: $\rho(X+m)=\rho(X)-m$ for $X\in\cX$ and $m\in\bR$.
\end{description}

\begin{remark}\label{cash coercivity rem}
It was argued in \cite{ElKarouiRavanelli}  that the requirement of cash additivity should be relaxed when interest rates are stochastic or ambiguous, or when bonds are subject to possible default. As a matter of fact, in our results it is possible to replace axiom (iii) by the following weaker notion:
\begin{description}
\item[\rm (iii')] cash coercivity: $\rho(-m)\longrightarrow+\infty$ when  $m\in\bR$ tends to $+\infty$.
\end{description}
To keep our presentation simple, we have however stated our results within the standard framework of cash-additive convex risk measures.\end{remark}

\medskip

As discussed, e.g., in  \cite{AcerbiTasche,BelomestnyKraetschmer,BeutnerZaehle2010,Contetal} it is a common procedure to estimate the risk of a financial position by means of a Monte Carlo procedure or from a sequence of historical data. Such a procedure makes sense when $\rho$ is \emph{law-invariant}: $\rho(X)=\rho(\wt X)$ whenever $X$ and $\wt X$ have the same law under $\bP$.
Let us denote by $\cM(\cX):=\{\bP\circ X^{-1}\,:\,X\in\cX\}$ the class of all Borel probability measures on $\bR$ that arise as the distribution of some $X\in\cX$. Law invariance of a risk measure $\rho:\cX\to\bR$ is equivalent to the existence of a map $\cR_\rho:\cM(\cX)\to\bR$ such that
\begin{equation}\label{}
\rho(X)=\cR_\rho(\bP\circ X^{-1}),\qquad X\in\cX.
\end{equation}
This map $\cR_\rho$ will be called the \emph{risk functional} associated with $\rho$. It can  be used in a natural way to construct estimates for the risk $\rho(X)$ of $X\in\cX$. All one has to do is to  take an estimate $\wh\mu_n$ for the law $\mu=\bP\circ X^{-1}$ of $X$ and to plug this estimate into the functional $\cR_\rho$ to get the desired estimator:
\begin{eqnarray}\label{rho estimator - 1}
\wh\rho_n:=\cR_\rho(\wh\mu_n);
\end{eqnarray}
see, e.g., \cite{AcerbiTasche,BelomestnyKraetschmer,BeutnerZaehle2010,Contetal,Weber}.
For instance, $\wh\mu_n$ can be the empirical distribution $\frac1n\sum_{k=1}^n
\delta_{x_k}$ of a sequence $x_1,\dots, x_n$ of historical observations or  Monte Carlo simulations.

\begin{example}\label{mean example}In this very basic example we take $\cX=L^1$ and $\rho(X)=-\bE[\,X\,]$. Then $\rho$ is a law-invariant coherent risk measure and $\cR_\rho(\mu)$ is simply the negative mean of the measure $\mu$, i.e., $\cR_\rho(\mu)=-\int x\,\mu(dx)$.
\end{example}

There are  two natural questions that arise in this context. The first question refers to the \emph{consistency} of a sequence of estimates $\wh\rho_n$. That is, under which conditions do we have $\wh\rho_n\to\rho(X)$ as $n\ua\infty$? When we assume that the estimates $\wh\mu_n$ converge to $\mu$ in some suitable topology on the space of measures, then the consistency of $\wh\rho_n$ boils down to establishing the continuity of $\cR_\rho$ in that topology. We will thus also analyze the continuity properties of $\cR_\rho$.

Once consistency and continuity have been established, one can investigate the \emph{robustness}
of the estimate $\wh\rho_n$. Informally robustness refers to the stability of  $\wh\rho_n$ with respect to small perturbations of the law under which the data points $x_1,\dots, x_n$  are generated. The issue of robustness of this plug-in method for risk measures was first raised in \cite{Contetal}. Here we will address it in Section \ref{qualitative robustness}.

\bigskip

Before stating our results, we need to specify the setting in which we are going to work.
A common choice for $\cX$ is the space $L^\infty:=L^\infty(\Omega,\cF,\bP)$ of all bounded random variables. When dealing with possibly unbounded risks, however, the choice $\cX=L^\infty$ is not suitable. It was observed in \cite{BiaginiFrittelli,BiaginiFrittelli2,CheriditoLi2009} that Orlicz spaces or Orlicz hearts may be  appropriate choices for $\cX$ when risks may be unbounded.  Let us thus recall the basic notions of Orlicz spaces. Following \cite{CheriditoLi2009},  a Young function will be a left-continuous, nondecreasing convex function $\Psi:\bR_+\to[0,\infty]$ such that $0=\Psi(0)=\lim_{x\da0}\Psi(x)$ and $\lim_{x\ua\infty}\Psi(x)=\infty$. Such a function is continuous except possibly at a single point at which it jumps to $+\infty$.  The Orlicz space associated with $\Psi$ is
$$L^\Psi:=L^\Psi(\Omega,\cF,\bP)=\big\{X\in L^0\,:\,\bE[\,\Psi(c|X|)\,]<\infty\text{ for some $c>0$}\big\}.
$$
It is a Banach space when endowed with the Luxemburg norm,
 $$
    \|X\|_{\Psi} := \inf\left\{\lambda > 0\,:\,\bE [\,\Psi(|X|/\lambda)\,]\leq 1\right\}.
$$
We will frequently use the following equivalent property for convergence of a sequence $(X_n)_{n\in\bN_0}\subset L^\Psi$:
\begin{equation}\label{Proposition 2.1.10 in EdgarSucheston1992}
\|X_n-X_0\|_\Psi\longrightarrow0\quad\text{if and only if}\quad \bE\big[\,\Psi(k|X_n-X_0|)\,\big]\longrightarrow0\quad\text{for all $k>0$;}
\end{equation}
see Proposition 2.1.10 in \cite{EdgarSucheston1992}.
The Orlicz heart is defined as
$$H^\Psi:=H^\Psi(\Omega,\cF,\bP)=\big\{X\in L^0\,:\,\bE[\,\Psi(c|X|)\,]<\infty\text{ for all $c>0$}\big\}.
$$
When $\Psi$ takes the value $+\infty$, then $H^\Psi=\{0\}$ and $L^\Psi=L^\infty$. For this reason, we will mainly focus on the case in which $\Psi$ is finite. Then $L^\infty\subset H^\Psi\subset L^\Psi\subset L^1$, and these inclusions may all be strict. In fact for finite $\Psi$, the identity $H^\Psi=L^\Psi$ holds if and only if $\Psi$ satisfies the so-called
 $\Delta_2$-condition,
\begin{equation}\label{Delta2}
    \text{there are $C$, $x_0>0$ such that $\Psi(2x)\le C\Psi(x)$ for all $x\ge x_0$; }
\end{equation}
see \cite[Theorem 2.1.17 (b)]{EdgarSucheston1992}. This condition is clearly satisfied when specifically $\Psi(x)=x^p/p$ for some $p\in[1,\infty)$. In this case,   $H^\Psi=L^\Psi=L^p$ and $ \|Y\|_{\Psi}=p^{-1/p}\|Y\|_p$.


\begin{example}[Risk measure based on one-sided moments]\label{Risk measure based on one-sided moments}
The risk measure based on one-sided moments is defined as
\begin{equation}\label{Risk measure based on one-sided moments - eq}
    \rho(X):=-\ex[X] + a\,\ex[((X-\ex[X])^{-})^p]^{1/p},
\end{equation}
where $p\in [1,\infty)$ and $a\in[0,1]$ are constants; see also \cite{Delbaen2002}. It is well-defined and finite on $L^p$, law-invariant, and it is easily seen that it satisfies the axioms of a convex risk measure. \hfill$\diamondsuit$
\end{example}
\medskip

When the $\Delta_2$-condition \eqref{Delta2} is not satisfied, then the Orlicz heart rather than the Orlicz space $L^\Psi$ is the natural domain for a convex risk measure as is illustrated by the following examples.

\medskip

\begin{example}[Entropic risk measure]\label{entropic rm ex} The entropic risk measure is defined as
\begin{equation}\label{entropic rm eq}
\rho (X):=\frac1\beta\log\bE[\,e^{-\beta X}\,],
\end{equation}
where $\beta$ is a positive constant; see \cite[Example 12]{FoellmerSchied2002}. It is well-defined  and finite on the Orlicz heart $H^\Psi$ for the Young function $\Psi(x)=e^x-1$, but it is clearly not finite on the entire Orlicz space $L^\Psi$. Clearly, $\Psi$ does not satisfy the  $\Delta_2$-condition \eqref{Delta2}. The associated risk functional $\cR_\rho:\cM(H^\Psi)\to\bR$ is given by
$\cR_\rho(\mu)=\frac1\beta\log\int e^{-\beta x}\,\mu(dx)$.\hfill$\diamondsuit$
\end{example}
\medskip

\begin{example}[Utility-based shortfall risk]\label{Utility-based shortfall risk example} The utility-based shortfall risk measure with  loss function $\ell$ was introduced in  \cite{FoellmerSchied2002} as
\begin{equation}\label{utility-based shortfall risk measure}
\rho(X):=\inf\{m\in\bR\,:\,\bE[\,\ell(-X-m)\,]\le x_0\}
\end{equation}
for $X\in L^\infty$, where $\ell:\bR\to\bR_+$ is convex, nondecreasing, not identically constant, and $x_0$ belongs to the interior of $\ell(\bR)$; see also Section 4.9 in \cite{FoellmerSchied2011}. By taking $\ell(x)=e^{\beta x}$ and $x_0=1$ we recover the entropic risk measure \eqref{entropic rm eq}. In the general case, we can define a finite Young function $\Psi(x):=\ell(x)-\ell(0)$ for $x\ge0$. With this choice,  $\rho(X)$ is well-defined and finite for each $X\in H^\Psi$. Indeed, we have
\begin{equation}\label{utility-based shortfall risk measure finite}
0\le \ell(-X-m)\le \frac12\ell(-2X)+\frac12\ell(-2m)\le \frac12\Psi(2|X|)+\frac12\Psi(2|m|)+\ell(0),
\end{equation}
which implies that $\bE[\,\ell(-X-m)\,]$ is finite for $m\in\bR$ and  $X\in H^\Psi$. It is now easy to see that $\rho$ is in fact a convex risk measure on $H^\Psi$. But when $\Psi$ does not satisfy the $\Delta_2$-condition \eqref{Delta2}, then $\rho(X)$ need not be finite for each $X\in L^\Psi$. \hfill$\diamondsuit$
\end{example}

\subsection{Consistency}\label{Consistency section}

Let $\rho$ be a law-invariant convex risk measure on $H^\Psi$, where $\Psi$ is a finite Young function. We start by discussing the strong consistency of estimating the risk $\rho(X)$ from a stationary and ergodic sequence $X_1,X_2,\dots$  in  $ H^\Psi$ (see \cite[Section 6.7]{Breiman} for the definition of a stationary and ergodic process) in the sense that estimators converge a.s. This is a natural question if one wishes to estimate $\rho(X)$ from historical data or from Monte Carlo simulations, where $X$ is a random variable with the same law as $X_i$. Recall that  every i.i.d.~sequence is stationary and ergodic, and that ergodicity is implied by various mixing conditions.

We denote by
\begin{equation}\label{empirical distribution}
\wh m_n:=\frac1n\sum_{i=1}^n\delta_{X_i}
\end{equation}
the empirical distribution of $X_1,\dots, X_n$ and by
\begin{equation}\label{rho estimator}
\wh\rho_n:=\cR_\rho(\wh m_n)
\end{equation}
the corresponding estimate for $\rho(X)$. In the special case when $\rho$ is a coherent distortion risk measure, the estimator $\wh\rho_n$ has the form of an L-statistic, and so the methods and results from van Zwet \cite{vanZwet}, Gilat and Helmers \cite{GilatHelmers1997}, and Tsukahara \cite{Tsukahara}, become applicable. Our next result, however,  is valid for a general law-invariant convex risk measure $\rho$. 

\medskip

\begin{theorem}\label{Consistency thm}Suppose that $\rho$ is a law-invariant convex risk measure on $H^\Psi,$ and  let $X_1,X_2,\dots$ be a stationary and ergodic sequence of random variables with the same law as  $X\in H^\Psi$. Then \eqref{rho estimator} is a strongly consistent estimator for $\rho(X)$ in the sense that $\wh\rho_n\to\rho(X)$ $\bP$-a.s.
\end{theorem}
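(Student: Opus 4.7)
Since $\rho$ is law-invariant, the estimate $\wh\rho_n(\omega) = \cR_\rho(\wh m_n(\omega))$ depends on the sample only through the empirical law. The plan is to show, $\bP$-a.s., that $\wh m_n$ converges to $\mu := \bP\circ X^{-1}$ in a mode of convergence strong enough to make $\cR_\rho$ continuous, and then to conclude via a Skorohod-type coupling combined with norm continuity of $\rho$ on $H^\Psi$.

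The first step is to apply the Birkhoff--Khinchin ergodic theorem to the stationary ergodic sequences $(\Psi(k|X_i|))_{i\in\bN}$ for each $k\in\bN$. These sequences are integrable because $X\in H^\Psi$, so
\[
\int \Psi(k|x|)\,\wh m_n(dx) \;=\; \frac1n\sum_{i=1}^n \Psi(k|X_i|) \;\longrightarrow\; \bE[\,\Psi(k|X|)\,] \quad \bP\text{-a.s.}
\]
Applied to bounded continuous test functions, the same theorem yields $\wh m_n \to \mu$ weakly $\bP$-a.s. Off a single $\bP$-null set one has both convergences simultaneously for all $k\in\bN$, which is exactly the $\psi$-weak convergence of Section \ref{Skorohod Section} for the gauge $\psi(x) = \Psi(|x|)$.

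The second step invokes the Skorohod-type representation theorem for $\psi$-weak convergence from Section \ref{Skorohod Section}: for each good $\omega$, there exist random variables $X^\star$ and $Y_n$ on a common atomless probability space, with laws $\mu$ and $\wh m_n(\omega)$ respectively, belonging to $H^\Psi$ and satisfying $\|Y_n - X^\star\|_\Psi \to 0$. Law invariance then gives $\cR_\rho(\wh m_n(\omega)) = \rho(Y_n)$ and $\cR_\rho(\mu) = \rho(X^\star) = \rho(X)$, so it remains to show that $\rho$ is continuous on $H^\Psi$ in the Luxemburg norm. This is a consequence of the extended Namioka--Klee theorem: a finite-valued, monotone, convex functional on the Banach lattice $H^\Psi$ is automatically norm-continuous (see \cite{CheriditoLi2009}). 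Chaining these facts yields $\wh\rho_n(\omega) \to \rho(X)$ for a.e.\ $\omega$.

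The main obstacle is ensuring that the Skorohod coupling lives in $H^\Psi$ rather than merely in $L^\Psi$, which is subtle when $\Psi$ does not satisfy the $\Delta_2$-condition; this is precisely why the ergodic statement is needed for \emph{every} dilation $k|X_i|$ rather than just for $|X_i|$, since it is the convergence of all these Orlicz moments that drives norm (rather than merely weak) convergence of the coupled random variables. Once the Skorohod machinery of Section \ref{Skorohod Section} and the norm-continuity input of Section \ref{psi-weak topology and qualitative robustness} are in place, the proof reduces to the short chaining argument above.
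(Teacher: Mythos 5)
Your first step and your overall architecture match the paper's proof, and you correctly identify that the convergence of \emph{all} dilated Orlicz moments $\frac1n\sum_i\Psi(k|X_i|)\to\bE[\Psi(k|X|)]$ is the essential input. However, there is a genuine gap in your second step: you invoke the Skorohod representation theorem of Section \ref{Skorohod Section} (Theorem \ref{SDW}) to produce a coupling with $\|Y_n-X^\star\|_\Psi\to0$, but that theorem is an \emph{equivalence} with the $\Delta_2$-condition \eqref{Delta2}, which Theorem \ref{Consistency thm} does not assume. When $\Delta_2$ fails, the direction you need is false: the paper's own proof of (a)$\Rightarrow$(b) in Theorem \ref{SDW} constructs a sequence $\mu_n\to\delta_0$ $\Psi$-weakly for which \emph{no} coupling converges in Luxemburg norm. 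Mere $\Psi$-weak convergence of $\wh m_n(\omega)$ to $\mu$ (weak convergence plus convergence of the $k=1$ Orlicz moment) is therefore not enough, and the citation cannot be repaired by appealing to the "machinery" of Section \ref{Skorohod Section}, since Lemma \ref{uniformintegrable} there also assumes $\Delta_2$. Your closing paragraph flags exactly this obstacle but then asserts it is overcome by the very theorem that is unavailable.

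The fix — which is what the paper actually does — is to bypass Theorem \ref{SDW} entirely: apply the \emph{ordinary} Skorohod representation to the weak convergence $\wh m_n(\omega_0)\to\mu$ to get $X_n^{\omega_0}\to X^{\omega_0}$ $\bP$-a.s.; then, for each fixed $k\in\bN$, combine $\Psi(k|X_n^{\omega_0}|)\to\Psi(k|X^{\omega_0}|)$ a.s.\ with the convergence of expectations $\bE[\Psi(k|X_n^{\omega_0}|)]\to\bE[\Psi(k|X^{\omega_0}|)]$ (from the ergodic theorem) and Vitali's theorem to conclude that $(\Psi(k|X_n^{\omega_0}|))_n$ is uniformly integrable for every $k$; finally, for $a>0$ choose $k\ge2a$ and use the convexity bound $\Psi(a|X_n^{\omega_0}-X^{\omega_0}|)\le\frac12\Psi(k|X_n^{\omega_0}|)+\frac12\Psi(k|X^{\omega_0}|)$ to get uniform integrability of the differences, whence $\bE[\Psi(a|X_n^{\omega_0}-X^{\omega_0}|)]\to0$ for every $a>0$ and thus $\|X_n^{\omega_0}-X^{\omega_0}\|_\Psi\to0$ by \eqref{Proposition 2.1.10 in EdgarSucheston1992}. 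This is precisely where the availability of every dilation $k$ is consumed. Your remaining steps (law invariance, norm continuity of $\rho$ on $H^\Psi$ via \cite{CheriditoLi2009}) are correct and identical to the paper's.
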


\medskip

For illustration, many GARCH processes are strictly stationary and ergodic (and even $\beta$-mixing); see, for instance, \cite{Boussama,Nelson}. Results on strong consistency in the case where $X_1,X_2,\ldots$ is a (not necessarily stationary) strongly mixing sequence of identically distributed random variables can be found in \cite{Zaehle2013}. Results on weak consistency and on the rate of weak convergence can be found in \cite{BelomestnyKraetschmer} for very general coherent risk measures and strong mixing, and in \cite{BeutnerZaehle2010} for a certain class of distortion risk measures and more general data dependencies. In Theorem \ref{Consistency thm} it is essential that the sequence $X_1,X_2,\dots$ satisfies a strong law of large numbers such as Birkhoff's ergodic theorem or Kolmogorov's law of large numbers. Perhaps surprisingly, it will in general not   suffice to take just any reasonable estimating sequence $(\wh\mu_n)$ for $\bP\circ X^{-1}$ to obtain the consistency $\cR_\rho(\wh\mu_n)\to\rho(X)$. This is due to the possible failure of continuity of the map $\mu\mapsto\cR_\rho(\mu)$ when $\Psi$ is not chosen suitably.  We will give a precise meaning to this in our Theorem \ref{continuity of convex risk measures}, where we analyze the continuity properties of the map $\mu\mapsto\cR_\rho(\mu)$. These continuity properties will also  be crucial for our subsequent discussion of the robustness of the estimators \eqref{rho estimator}.

\subsection{Continuity properties of $\bm\cR_{\bm\rho}$}\label{psi-weak topology and qualitative robustness}

The basic issue when discussing the continuity  of $\cR_\rho$ can already be observed in Example \ref{mean example}. There the map $\cR_\rho(\mu)=-\int x\,\mu(dx)$ is not continuous with respect to the standard weak topology of measures.  We therefore need to use a stronger topology, a fact that was already observed in  \cite{Weber}.
More precisely, we will consider the \emph{$\psi$-weak topology} associated with a
\emph{weight function} $\psi$, i.e., a continuous function $\psi:\R\to[0,\infty)$
satisfying $\psi\ge 1$ outside some compact set.
We denote by $\cM_1^\psi:=\cM_1^\psi(\bR)$ the class of all probability measures $\mu$ on $\bR$ for which $\int\psi\,d\mu<\infty$. It coincides with the set $\cM_1:=\cM_1(\bR)$ of all probability measures on $\bR$ if and only if $\psi$ is bounded.

Furthermore, $C_\psi(\R)$ will denote the space of all continuous functions $f$ on $\R$ for which $\sup_{x\in\bR}|f(x)/(1+\psi(x))|<\infty$. The {\em $\psi$-weak topology} on ${\cal M}_1^\psi$ is the coarsest topology for which all mappings $\mu\mapsto\int f\,d\mu$ with $f\in C_\psi(\R)$  are continuous; cf.\ Section A.6 in \cite{FoellmerSchied2011}. Clearly, the $\psi$-weak topology is finer than the weak topology, and the two topologies coincide if and only if $\psi$ is bounded; see Appendix \ref{psi weak Appendix} for details. When $\Psi$ is a finite Young function, then $\Psi(|\cdot|)$ is a weight function, and we will simply write $\cM_1^\Psi$ in place of $\cM_1^{\Psi(|\cdot|)}$. We will also use the term $\Psi$-weak convergence instead of $\Psi(|\cdot|)$-weak convergence etc. We recall the notation
$$
\cM(H^\Psi)=\big\{\bP\circ X^{-1}\,:\,X\in H^\Psi\big\}
$$
for the class of all laws  of random variables $X\in H^\Psi$.

\medskip

\begin{remark}For any finite Young function $\Psi$, the identity  $\cM(H^\Psi)=\cM_1^\Psi$ holds if and only if $\Psi$ satisfies the $\Delta_2$-condition \eqref{Delta2}.  Indeed, since the underlying probability space is atomless, $\cM_1^\Psi$ coincides with the set of the laws of all random variables $X$ with $\bE[\,\Psi(|X|)\,]<\infty$.  But by \cite[Theorem 2.1.17]{EdgarSucheston1992} this class of  random variables coincides with $H^\Psi$ if and only if the  $\Delta_2$-condition holds. \hfill$\diamondsuit$
\end{remark}

\medskip

\begin{theorem}\label{continuity of convex risk measures}
For a finite Young function  $\Psi$ the following conditions are equivalent.
\begin{enumerate}
\item For every law-invariant convex risk measure $\rho$ on $H^\Psi$, the map $\cR_\rho:\cM(H^\Psi)\to\bR$ is continuous  for  the $\Psi $-weak topology.
\item  $\Psi$ satisfies the $\Delta_2$-condition \eqref{Delta2}.
\end{enumerate}
\end{theorem}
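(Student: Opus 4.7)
My approach is to prove the two implications independently using very different techniques.

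For (2) $\Rightarrow$ (1), I would combine the Skorohod-type representation theorem for $\Psi$-weak convergence developed in Section \ref{Skorohod Section} with the norm continuity of finite convex monotone functionals on Orlicz hearts, established by Cheridito and Li \cite{CheriditoLi2009}. Under $\Delta_2$, the identities $H^\Psi = L^\Psi$ and $\cM(H^\Psi) = \cM_1^\Psi$ noted in the remark preceding the theorem place $\cR_\rho$ on the full $\Psi$-weak space $\cM_1^\Psi$. Given $\mu_n \to \mu$ $\Psi$-weakly, the Skorohod representation yields random variables $X_n \sim \mu_n$ and $X \sim \mu$ on $(\Omega,\cF,\bP)$ (atomlessness lets us realize them there without loss of generality) with $\|X_n-X\|_\Psi \to 0$; the extended Namioka--Klee theorem then gives $\rho(X_n) \to \rho(X)$, i.e.\ $\cR_\rho(\mu_n)\to\cR_\rho(\mu)$. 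Because the $\Psi$-weak topology on $\cM_1^\Psi$ is metrizable (Appendix \ref{psi weak Appendix}), sequential continuity suffices.

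For (1) $\Rightarrow$ (2) I would argue contrapositively and construct an explicit counterexample. If $\Psi$ fails $\Delta_2$, extract $y_n \uparrow \infty$ with $\Psi(2y_n)/\Psi(y_n) \uparrow \infty$ and set $p_n := 1/\sqrt{\Psi(y_n)\Psi(2y_n)}$, so that $p_n \to 0$, $p_n \Psi(y_n) \to 0$ but $p_n \Psi(2y_n) \to \infty$. The two-point measures $\mu_n := (1-p_n)\delta_0 + p_n \delta_{-y_n}$ lie in $\cM(L^\infty) \subset \cM(H^\Psi)$ and converge to $\delta_0$ $\Psi$-weakly (weak convergence is immediate from $p_n \to 0$, and $\int \Psi(|x|)\,d\mu_n = p_n\Psi(y_n) \to 0$ yields the $\Psi$-uniform integrability). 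As risk measure I take the utility-based shortfall of Example \ref{Utility-based shortfall risk example} with loss $\ell(x) := \Psi(2x^+) + 1$ and threshold $x_0 := 2$; its associated Young function (in the sense of that example) is $\tilde\Psi(x) = \Psi(2x)$ for $x \geq 0$, and since $H^{\tilde\Psi} = H^\Psi$, this $\rho$ is a law-invariant convex risk measure on $H^\Psi$. A direct computation shows $\rho(0) = -\tfrac12 \Psi^{-1}(1) < 0$, while for $X_n \sim \mu_n$ any admissible $m<0$ would force $p_n\Psi(2(y_n+|m|)) \ge p_n\Psi(2y_n) \to \infty$, an absurdity; hence $\rho(X_n) \ge 0 > \rho(0)$, so $\cR_\rho(\mu_n) \not\to \cR_\rho(\delta_0)$ and continuity fails.

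The main obstacle is the forward implication, which rests on the Skorohod representation for $\Psi$-weak convergence --- a substantial technical result developed in its own section of the paper. The converse is technically lighter but requires the right conceptual choice: the counterexample must exploit precisely the super-$\Psi$ growth that $\Delta_2$ would forbid, and basing $\rho$ on $\Psi(2\,\cdot\,)$ rather than $\Psi$ is the cleanest way to witness that gap, since a $\Psi$-weakly convergent sequence controls $\Psi$ but not $\Psi(2\,\cdot\,)$ when $\Delta_2$ breaks down.
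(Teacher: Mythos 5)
Your proof is correct. The implication (b)$\Rightarrow$(a) is essentially the paper's own argument: metrizability reduces the claim to sequential continuity, Theorem \ref{SDW} supplies representatives with $\|X_n-X_0\|_\Psi\to0$, and the Cheridito--Li extended Namioka--Klee theorem finishes. For (a)$\Rightarrow$(b) you take a genuinely different route. The paper does not extract a sequence $y_n$ witnessing the failure of \eqref{Delta2}; instead it invokes $H^\Psi\neq L^\Psi$ to produce a single unbounded $Y\ge0$ with $\bE[\Psi(Y)]<\infty$ and $\bE[\Psi(2Y)]=\infty$, sets $X_n:=-((Y-n)^+\wedge a_n)$ for carefully chosen truncation levels $a_n$, and works with the shortfall loss $\ell(x)=\Psi(8x^+)$, transferring the divergence into the defining equation $\bE[\ell(-X_n-z_n)]=1$ via the convexity inequality $\ell(x-y)\ge 2\ell(x/2)-\ell(y)$ applied twice. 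Your two-point measures $\mu_n=(1-p_n)\delta_0+p_n\delta_{-y_n}$ with $p_n\Psi(y_n)\to0$ and $p_n\Psi(2y_n)\to\infty$, combined with $\ell(x)=\Psi(2x^+)+1$ and $x_0=2$, achieve the same contradiction by a one-line estimate ($\bE[\ell(-X_n-m)]\ge 1+p_n\Psi(2y_n)$ for $m<0$, so eventually no negative $m$ is admissible while $\rho(0)<0$), and all the required checks go through: $\ell$ is a valid loss function with $2$ in the interior of its range, $H^{\tilde\Psi}=H^\Psi$ for $\tilde\Psi(x)=\Psi(2x)$, and the $\mu_n$ lie in $\cM(L^\infty)\subset\cM(H^\Psi)$ and converge $\Psi$-weakly to $\delta_0$ by Lemma \ref{psi-weak-lemma}. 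Your version is more elementary and makes the mechanism transparent ($\Psi$-weak convergence controls $\int\Psi\,d\mu_n$ but not $\int\Psi(2|\cdot|)\,d\mu_n$ exactly when \eqref{Delta2} fails); the paper's version buys economy, since the same truncated-$Y$ construction is recycled in the proofs of Theorem \ref{SDW}(a)$\Rightarrow$(b) and Proposition \ref{psi linear growth Prop}, and it additionally shows that $\rho(X_n)$ is unbounded rather than merely bounded away from $\rho(0)$.
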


\medskip

\begin{remark}Fix   $p\in [1,\infty)$ and let $\Psi_p(x)= x^{p}/p$.
According to \cite[Theorem 7.12]{Villani2003}, the $\Psi_p $-weak topology  is generated by the {\it Wasserstein metric} of order $p$,
$$
    d_{W_{p}}(\mu,\nu)\,:=\,\inf\Big\{\Big(\int |x - y|^p\,\pi(dx,dy)\Big)^{1/p}:\,\pi\in {\cal M}_1(\R\times\R)\mbox{ with marginals }\mu,\nu \Big\}.
$$
Since $\Psi_p$ satisfies the $\Delta_2$-condition (\ref{Delta2}), Theorem \ref{continuity of convex risk measures} implies that $\cR_\rho$ is continuous  with respect to $d_{W_{p}}$ whenever $\rho$ is a law-invariant convex risk measure on $L^p$. A corresponding result for $p=\infty$ is stated in \cite[Lemma 2.4]{Weber}.
{\hspace*{\fill}$\Diamond$\par\bigskip}
\end{remark}

\medskip

In Theorem \ref{continuity of convex risk measures}, a risk measure $\rho$ is given on some Orlicz heart $H^\Psi$, and it is shown that $\cR_\rho$ is continuous with respect to the $\Psi $-weak topology when $\Psi$ satisfies the $\Delta_2$-condition \eqref{Delta2}. But one could ask whether  $\cR_\rho$ is even continuous with respect to a  weaker topology. For instance, this would  be the case when $\rho$ can be extended to a law-invariant convex risk measure on a larger Orlicz heart $H^\Phi\supset H^\Psi$.

To address this question, we consider the generic situation in which $\rho$ is a law-invariant convex risk measure on $L^\infty$ and let
\begin{equation}\label{risk measure extension}
\bbar\rho:L^1\longrightarrow\bR\cup\{+\infty\}
\end{equation}
denote the unique  extension of $\rho$ that is convex, monotone,  and lower semicontinuous with respect to the $L^1$-norm. The existence of such an extension was proved in \cite{FilipovicSvindland}. When $\bbar\rho$ is finite on some Orlicz heart $H^\Psi$ with finite $\Psi$, it will be continuous on $H^\Psi$ with respect to the corresponding Luxemburg norm by \cite[Theorem 4.1]{CheriditoLi2009}, and so it will also be cash additive on $H^\Psi$ when $\rho$ is cash additive on $L^\infty$.

\medskip

\begin{theorem}
\label{extension}Suppose that $\rho$ is a  law-invariant convex risk measure on $L^\infty$.
Let furthermore $ \Psi$ be a finite Young function satisfying the $\Delta_2$-condition \eqref{Delta2}.
 Then the following
conditions are equivalent.
\begin{enumerate}
\item $\bbar\rho$ is finite on $H^ \Psi$.
\item ${\cal R}_{\overline{\rho}}$ is finite and continuous for the $\Psi$-weak topology on ${\cal M}(H^\Psi)$.
\item  ${\cal R}_{\rho}$ is finite and continuous for the $\Psi$-weak topology on ${\cal M}(L^\infty)$.
\item  If $(X_n)$ is a sequence in $L^\infty$ with $\|X_n\|_ \Psi\to0$, then $\rho(X_n)\to\rho(0)$.
\end{enumerate}
\end{theorem}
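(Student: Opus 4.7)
The plan is to establish the cyclic chain of implications $(a) \Rightarrow (b) \Rightarrow (c) \Rightarrow (d) \Rightarrow (a)$.

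For $(a) \Rightarrow (b)$: By the facts stated in the preamble, if $\bbar\rho$ is finite on $H^\Psi$ then it is Luxemburg-norm continuous and cash additive there via \cite[Theorem 4.1]{CheriditoLi2009}; I would additionally verify that $\bbar\rho|_{H^\Psi}$ inherits law invariance from $\rho$ on $L^\infty$. For $X,Y \in H^\Psi$ with the same law, the truncations $X_n := (X \wedge n) \vee (-n)$ and its analogue $Y_n$ lie in $L^\infty$ and have identical laws, so $\rho(X_n) = \rho(Y_n)$; the dominated convergence theorem applied to $\Psi(c|X - X_n|)$ for every $c > 0$, together with \eqref{Proposition 2.1.10 in EdgarSucheston1992}, yields $\|X - X_n\|_\Psi \to 0$, so Luxemburg-continuity gives $\bbar\rho(X) = \lim \rho(X_n) = \lim \rho(Y_n) = \bbar\rho(Y)$. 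Theorem \ref{continuity of convex risk measures} then applies, delivering $\Psi$-weak continuity of $\cR_{\bbar\rho}$ on $\cM(H^\Psi)$.

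For $(b) \Rightarrow (c)$: Since $\bbar\rho$ extends $\rho$ and $\cM(L^\infty) \subseteq \cM(H^\Psi)$, with the $\Psi$-weak topology on $\cM(L^\infty)$ being the subspace topology from $\cM(H^\Psi)$, finiteness and continuity of $\cR_{\bbar\rho}$ restrict to those of $\cR_\rho$. For $(c) \Rightarrow (d)$: The main step is to show that $\|X_n\|_\Psi \to 0$ with $X_n \in L^\infty$ implies $\mu_n := \bP \circ X_n^{-1} \to \delta_0$ in the $\Psi$-weak topology. By \eqref{Proposition 2.1.10 in EdgarSucheston1992} this gives $\bE[\Psi(|X_n|)] \to 0$, whence $X_n \to 0$ in probability by Chebyshev. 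For any $f \in C_\Psi(\R)$, for which $|f(x)| \leq c(1 + \Psi(|x|))$ for some constant $c$, I would split $\int f\,d\mu_n$ over $\{|X_n| < \delta\}$ and its complement: on the first set, continuity of $f$ at $0$ yields an $\varepsilon$-bound; the integral over the complement is dominated by $(c + |f(0)|)\bP(|X_n| \geq \delta) + c\,\bE[\Psi(|X_n|)]$, which vanishes as $n \to \infty$. Continuity of $\cR_\rho$ from (c) at $\delta_0 \in \cM(L^\infty)$ then yields $\rho(X_n) = \cR_\rho(\mu_n) \to \cR_\rho(\delta_0) = \rho(0)$.

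For $(d) \Rightarrow (a)$, which I expect to be the main technical obstacle: (d) states that $\rho$ is Luxemburg-norm continuous at $0$ on $L^\infty$, so there exist $r, M > 0$ with $|\rho(X)| \leq M$ for $X \in L^\infty$ with $\|X\|_\Psi \leq r$. I would first transfer this bound to $H^\Psi$ via truncation: for $X \in H^\Psi$ with $\|X\|_\Psi \leq r$, the truncations $X_n \in L^\infty$ satisfy $|X_n| \leq |X|$ and hence $\|X_n\|_\Psi \leq r$, so $\rho(X_n) \leq M$; since $L^\Psi \hookrightarrow L^1$ continuously (from the growth $\Psi(x)/x \to \infty$ implied by convexity of $\Psi$), $X_n \to X$ in $L^1$, and the $L^1$-lower semicontinuity of $\bbar\rho$ gives $\bbar\rho(X) \leq \liminf_n \rho(X_n) \leq M$. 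The final and most delicate step is to propagate this boundedness on the Luxemburg ball $B_{H^\Psi}(0,r)$ to finiteness of $\bbar\rho$ on all of $H^\Psi$; for this I would invoke the dual representation of $\rho$ as a law-invariant convex risk measure on $L^\infty$. The bound $|\rho(X)| \leq M$ on a Luxemburg ball, combined with the cash-additivity-forced linear growth of $\rho$ on constants, translates via Fenchel conjugation into a coercivity estimate on the associated penalty function that confines its effective domain to densities bounded in the K\"othe dual $L^{\Psi^*}$; H\"older's inequality in the Orlicz duality pairing then yields $\bbar\rho(X) \leq C\|X\|_\Psi + M' < \infty$ for every $X \in H^\Psi$. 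Since $\bbar\rho$ never takes the value $-\infty$ by definition, this establishes finiteness on $H^\Psi$ and proves (a).
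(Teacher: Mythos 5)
Your chain $(a)\Rightarrow(b)\Rightarrow(c)\Rightarrow(d)$ matches the paper's route (the paper gets $(a)\Leftrightarrow(b)$ from Theorem \ref{continuity of convex risk measures}, and your extra verification of law invariance of $\bbar\rho$ on $H^\Psi$ by truncation, as well as your direct proof that $\|X_n\|_\Psi\to0$ forces $\mu_n\to\delta_0$ $\Psi$-weakly, are fine), and the first half of your $(d)\Rightarrow(a)$ -- boundedness of $\bbar\rho$ on a Luxemburg ball $B_r$ of $H^\Psi$ via truncation, monotonicity and $L^1$-lower semicontinuity -- is essentially the paper's argument. The problem is the last step, which you yourself flag as the delicate one: passing from boundedness of $\bbar\rho$ on $B_r$ to finiteness on all of $H^\Psi$. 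Your duality argument does not close this gap. From $\rho\le M$ on $B_r\cap L^\infty$ one indeed gets $\|dQ/d\bP\|_{\Psi^*}\le c\,(M+\alpha(Q))$ for every $Q$ in the domain of the penalty $\alpha$, but plugging this into $\rho(X)=\sup_Q(\bE_Q[-X]-\alpha(Q))$ yields only $\bE_Q[-X]-\alpha(Q)\le c'M\|X\|_\Psi+\alpha(Q)\,(c'\|X\|_\Psi-1)$; since $\alpha$ is in general unbounded, the supremum is only controlled when $\|X\|_\Psi<1/c'$, i.e.\ you recover finiteness on a ball, which you already had. Moreover, the conclusion you claim, $\bbar\rho(X)\le C\|X\|_\Psi+M'$ for all $X\in H^\Psi$, asserts linear growth and cannot hold for general convex (non-coherent) law-invariant risk measures; it is a signature of coherence, not of finiteness.

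The paper closes this gap by citing \cite[Theorem 4.3]{CheriditoLi2009}: for a convex risk measure on an Orlicz heart, $0$ in the interior of the effective domain already implies finiteness everywhere. If you want an elementary substitute, the right mechanism is monotonicity plus cash additivity combined with the defining property of the Orlicz heart: for $X\in H^\Psi$ one has $\bE[\Psi(k(|X|-n)^+)]\to0$ as $n\to\infty$ for every $k>0$ by dominated convergence, hence $\|(|X|-n)^+\|_\Psi<r$ for $n$ large; then $X\ge-|X|\ge-n-(|X|-n)^+$ gives $\bbar\rho(X)\le n+\bbar\rho(-(|X|-n)^+)\le n+M<\infty$. (A minor additional slip: convexity of a finite Young function does not imply $\Psi(x)/x\to\infty$ -- take $\Psi(x)=x$ -- only $\liminf_{x\to\infty}\Psi(x)/x>0$, which is what the embedding $L^\Psi\subset L^1$ actually needs.)
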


\subsection{Qualitative and comparative robustness}\label{qualitative robustness}

Informally, Hampel's classical concept of qualitative robustness of an estimator requires that  a small change in the law of the data results in only small changes in the law of the estimator. For a precise statement, it will be convenient to assume that the data arises from an i.i.d.~sequence of random variables $(X_i)$. We can then assume without loss of generality that the underlying probability space has a product structure: $\Omega=\bR^\bN$, $X_i(\omega)=\omega(i)$ for $\omega\in\Omega$ and $i\in\mathbb{N}$, and $\mathcal{F}:=\sigma(X_1,X_2,\dots)$.
For any Borel probability measure $\mu$ on $\R$, we will denote by
$$
    \mathbb{P}_\mu:=\mu^{\otimes\mathbb{N}}
$$
the corresponding product measure under which the $X_i$ are i.i.d.~with common distribution $\mu$. Then $(\Omega,\cF,\bP_\mu)$ is atomless unless $\mu$ is a Dirac measure. We will retain the definitions \eqref{empirical distribution} and \eqref{rho estimator} for the empirical distribution $\wh m_n$ and for the corresponding risk estimate $\wh\rho_n$ in this setting. We can now state the following version of Hampel's definition of qualitative robustness, suitably adapted to our more general setting. Since $\wh\rho_n$ as the plug-in estimator is determined by the risk functional $\cR_\rho$, we will refer to qualitative robustness of the sequence $(\wh\rho_n)$ simply as qualitative robustness of the risk functional $\cR_\rho$.

\medskip

\begin{definition}[Qualitative robustness]
Let $\cN\subset\cM_1$ be a set of probability measures.  Let furthermore $d_A$ be a metric on $\cN$ and $d_B$ be a metric on $\cM_1$. Then the risk functional $\cR_\rho$ is called \emph{robust on $\cN$  with respect to $d_A$ and $d_B$} if for all $\mu\in\cN$ and $\eps>0$ there exists $\delta>0$ and $n_0\in\bN$ such that
\begin{equation}\label{robustness def}
\nu\in\cN,\ d_A(\mu,\nu)\le\delta\quad \Longrightarrow\quad d_B\big(\bP_\mu\circ\wh\rho_n^{~-1},\bP_\nu\circ\wh\rho_n^{~-1})\le\eps\quad\text{for $n\ge n_0$.}
\end{equation}
\end{definition}

\medskip

In the classical literature on qualitative robustness \cite{Hampel,Huber,Cuevas1988,Mizera2010} and in \cite{Contetal}, the distances $d_A$  and $d_B$ are  chosen so that they generate the  weak topology  of measures. For instance, they are taken as the Prohorov or L\'evy metrics, and $\cN$ is taken as $\cM_1$.  It is a consequence of Hampel's celebrated theorem that such a choice essentially limits the concept of  robustness to risk functionals  that are continuous for the weak topology; see, e.g., \cite[Theorem 2.21]{Huber}.

Insisting on metrics for the weak topology when assessing the robustness of a risk functional $\cR_\rho$
is problematic for a number of
reasons.

First, two distributions $\mu$ and $\nu$ can be rather close with respect to a distance $d_A$ for the weak topology, but still have completely different tail behavior. In this case, robustness \eqref{robustness def} requires that $\cR_\rho(\mu)$ is \emph{insensitive to the tail behavior} of $\mu$. This can be seen immediately, e.g., from the definition of the L\'evy metric,
\begin{equation}\label{definition Levy metric}
d_{\text{L\'evy}}(\mu,\nu)=\inf\big\{\eps>0\,:\,F_\mu(x-\eps)-\eps\le F_\nu(x)\le F_\mu(x+\eps)+\eps\text{ for all $x$}\big\},
\end{equation}
where $F_\mu$ and $F_\nu$ are the distribution functions for $\mu$ and $\nu$. To illustrate this point, recall that  the L\'evy metric metrizes the weak topology on $\mathcal{M}_1$  and that the compactly supported probability measures are dense in $\mathcal{M}_1$ with respect to weak convergence. Hence, for every $\mu\in\cM_1$ and every $\eps>0$ there exists a compactly supported $\nu\in\cM_1$ such that $d_{\text{L\'evy}}(\mu,\nu)<\eps$.  Clearly, $\mu$ can have arbitrary tail behavior, whereas the tail behavior of $\nu$ is trivial. 
 In the recent years of financial crisis, it has become apparent, though, that a misspecification of the tail behavior of a P\&L can lead to a dramatic underestimation of the associated risk.

Second,  linking the terminology of robustness exclusively to the weak topology generates a sharp but artificial  division of risk functionals into the class of those that are called \lq\lq robust" and another class of those that are  called \lq\lq not robust".  The first class contains risk functionals such as Value at Risk which are insensitive to the tail behavior of P\&Ls, whereas the second class contains, e.g., the ordinary expectation and all law-invariant coherent risk measures  \cite{Contetal}.
This classification thus seems to provide a strong argument in favor of Value at Risk and against coherent risk measures.  We will see, however, that the distinction  between \lq\lq robust"  and \lq\lq non-robust" risk functionals is artificial because  there is actually a full continuum of possible degrees of robustness beyond the classical concept. This new look at robustness will then help us to bring the argument against coherent risk measures back into perspective: robustness is not lost entirely but only to some degree when Value at Risk is replaced by a coherent risk measure such as Average Value at Risk.

It was observed in \cite{Kraetschmer et al 2012} that the basic idea for addressing this problem is to choose suitable metrics in \eqref{robustness def}. For $d_B$ we will take the  Prohorov distance, which is defined as
$$
    d_{\mbox{\scriptsize{\rm Proh}}}(\mu,\nu)\,:=\,\inf\big\{\varepsilon>0\,:\,\mu(A)\le\nu(A^\varepsilon)+\varepsilon\mbox{ for all }A\in{\cal B}(\bR)\big\},
$$
where  $A^\varepsilon:=\{x\in \bR:\,\inf_{a\in A}|x-a|\le\varepsilon\}$ is the $\varepsilon$-hull of $A$.
 Note that $d_{\mbox{\scriptsize{\rm Proh}}}\ge d_{\text{L\'evy}}$, so our choice $d_B:= d_{\mbox{\scriptsize{\rm Proh}}}$ gives a stronger notion of robustness than the choice $d_B:= d_{\text{L\'evy}}$, which would just as well be possible.
 For $d_A$, we will choose the {\em Prohorov $\psi$-metric},
\begin{equation}\label{dpsi Definition Eq}
    d_{\psi}(\mu,\nu)\,:=\,d_{\text{Proh}}(\mu,\nu)+\Big|\int\psi\,d\mu-\int\psi\,d\nu\Big|,\qquad\mu,\nu\in{\cal M}_1^\psi,
\end{equation}
which by Lemma \ref{psi-weak-lemma} metrizes the $\psi$-weak topology on ${\cal M}_1^\psi$ for a given weight function $\psi$.  Also in \eqref{dpsi Definition Eq} we could have  replaced $d_{\mbox{\scriptsize{\rm Proh}}}$ by $d_{\text{L\'evy}}$, but here the advantage of $d_{\mbox{\scriptsize{\rm Proh}}}$ is that it extends  to a multivariate setting; see \cite{Kraetschmer et al 2012}. When $\psi(x)=\Psi(|x|)$ for some finite Young function $\Psi$, we will simply write $d_\Psi$ instead of $d_{\Psi(|\cdot|)}$.
We emphasize that, due to the summand $|\int\psi\,d\mu-\int\psi\,d\nu|$ on the right-hand side of (\ref{dpsi Definition Eq}), it strongly depends on the choice of the weight function $\psi$ in which extent the metric $d_\psi$ penalizes deviations in the tails.

We will also need the following notion.

\medskip

\begin{definition}\label{def uniformly psi integrating}
Let $\psi$ be a weight function. A set ${\cal N}\subset {\cal M}_1^\psi$ is called {\em uniformly $\psi$-integrating} when
\begin{equation}\label{def uniformly psi integrating - eq}
    \lim_{M\to\infty}\,\sup_{\nu\in{\cal N}}\,\int_{\{\psi\ge M\}}\psi\,d\nu\,=\,0.
\end{equation}
\end{definition}

\medskip

When $\psi$ is bounded,  every set $\cN\subset\cM_1^\psi=\cM_1$ is uniformly $\psi$-integrating.
But when $\psi$ is not bounded, then the uniform continuity  in \eqref{robustness def} will typically hold only on uniformly $\psi$-integrating sets $\cN$. Let us therefore introduce the following concept of robustness.

\medskip

\begin{definition}\label{def psi robustness of rf}
Let $\psi$ be a weight function and $\cM\subset\cM_1^\psi$. A risk functional  $\cR_\rho$ is called  \emph{$\psi$-robust on  $\cM$} when $\cR_\rho$ is robust with respect to  $d_{\psi}$ and $ d_{\mbox{\scriptsize{\rm Proh}}}$ on every uniformly $\psi$-integrating set $\cN\subset\cM$.
\end{definition}

\medskip

Notice that the classical notion of qualitative robustness is recovered in the case $\psi\equiv 1$. At first glance one might wonder why in Definition \ref{def psi robustness of rf} robustness of $\cR_\rho$ with respect to  $d_{\psi}$ and $d_{\mbox{\scriptsize{\rm Proh}}}$ is required  on every {\em uniformly $\psi$-integrating} set $\cN\subset\cM$. On the other hand, it seems to be ambitious to expect that robustness of $\cR_\rho$ with respect to  $d_{\psi}$ and $d_{\mbox{\scriptsize{\rm Proh}}}$ on ${\cal N}$ can hold without any condition that ensures that the tails of the probability measures $\nu\in{\cal N}$ do not differ too much. In Remark \ref{remark on UGC and Chung-LLN} below we give a motivation for restricting ourselves to uniformly $\psi$-integrating sets ${\cal N}$. Also notice that the condition of uniformly $\psi$-integrating is not very restrictive. For instance, every subset $\cN\subset\cM$ which is relatively compact for the $\psi$-weak topology is uniformly $\psi$-integrating; cf.\ Lemma \ref{charact of compact sets}. Anyway, we can now state the following preliminary result.

\medskip

\begin{proposition}\label{psi linear growth Prop}Let $\cR_\rho$ be the risk functional associated with a law-invariant convex risk measure $\rho$ on $L^\infty$. When $\psi:\bR_+\to(0,\infty)$ is a nondecreasing function such that $\cR_\rho$ is $\psi(|\cdot|)$-robust on $\cM(L^\infty)$, then $\psi$ has at least linear growth: $\liminf_{x\ua\infty}\psi(x)/x>0$.
\end{proposition}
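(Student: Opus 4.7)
My plan is a proof by contradiction. Suppose $\liminf_{x\uparrow\infty}\psi(x)/x=0$; I will construct a uniformly $\psi$-integrating set $\cN\subset\cM(L^\infty)$ on which $\cR_\rho$ is not robust with respect to $d_\psi$ and $d_{\mathrm{Proh}}$, contradicting the assumption. Pick $a_n\uparrow\infty$ with $\psi(a_n)/a_n\to 0$ and put $\eps_n:=1/\sqrt{a_n\psi(a_n)}$. Since $\psi\ge 1$ outside a compact set, one verifies $\eps_n\to 0$, $\eps_n\psi(a_n)=\sqrt{\psi(a_n)/a_n}\to 0$, and the crucial growth $\eps_n a_n=\sqrt{a_n/\psi(a_n)}\to\infty$. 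Define the compactly supported contaminations $\nu_n:=(1-\eps_n)\delta_0+\eps_n\delta_{-a_n}\in\cM(L^\infty)$ and the test set $\cN:=\{\delta_0\}\cup\{\nu_n\colon n\in\bN\}$. Routine estimates give $d_\psi(\delta_0,\nu_n)=\eps_n+\eps_n(\psi(a_n)-\psi(0))\to 0$, while for $M>\psi(0)$ the tail integral equals $\int_{\{\psi(|\cdot|)\ge M\}}\psi(|x|)\,d\nu_n(x)=\eps_n\psi(a_n)\mathbf{1}_{\{\psi(a_n)\ge M\}}$, whose supremum over $n$ tends to $0$ as $M\uparrow\infty$ because $\eps_n\psi(a_n)\to 0$; thus $\cN$ is uniformly $\psi$-integrating in the sense of Definition~\ref{def uniformly psi integrating}.

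The heart of the argument is a deterministic lower bound on $\wh\rho_m$ under $\bP_{\nu_n}$. After normalizing $\rho(0)=0$ (a translation that only shifts both sampling distributions by the same constant and therefore does not affect Prohorov distances), dilatation monotonicity, which holds for every law-invariant convex risk measure on $L^\infty$ by the Jouini--Schachermayer--Touzi / F\"ollmer--Schied theory, yields $\rho(Y)\ge\rho(\bE[Y])=-\bE[Y]$ for every $Y\in L^\infty$. Applying this to $Y\sim\wh m_m$ I obtain $\wh\rho_m=\cR_\rho(\wh m_m)\ge K_m a_n/m$ with $K_m:=|\{i\le m\colon X_i=-a_n\}|$; under $\bP_{\nu_n}$ the random variable $K_m$ is binomial with parameters $(m,\eps_n)$. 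The strong law of large numbers then supplies, for each $n$ so large that $\eps_n a_n>1$, a sample size $m(n)$ as large as one wishes with $\bP_{\nu_n}(\wh\rho_{m(n)}\ge\eps_n a_n/2)\ge 3/4$. Under $\bP_{\delta_0}$ in contrast $\wh\rho_m\equiv 0$ a.s., so the sampling law is $\delta_0$.

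Since $\eps_n a_n\to\infty$, for every sufficiently large $n$ the $\bP_{\nu_n}$-law of $\wh\rho_{m(n)}$ assigns mass at least $3/4$ to $(1/4,\infty)$. Testing the set $A=\{0\}$ in the definition of the Prohorov distance therefore gives $d_{\mathrm{Proh}}(\bP_{\delta_0}\circ\wh\rho_{m(n)}^{-1},\bP_{\nu_n}\circ\wh\rho_{m(n)}^{-1})>1/4$. Combined with $d_\psi(\delta_0,\nu_n)\to 0$ and the freedom to require $m(n)\ge n_0$, this flatly contradicts $\psi$-robustness of $\cR_\rho$ on $\cN$ at $\mu=\delta_0$. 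The only non-routine ingredient is the dilatation-monotonicity bound $\wh\rho_m\ge K_m a_n/m$; it is the mechanism that turns a $d_\psi$-small but tail-heavy contamination into a genuinely displaced empirical risk, and without such a bound the sampling law of $\wh\rho_m$ could a~priori remain concentrated near $\rho(0)$ despite the rare catastrophic observations at $-a_n$.
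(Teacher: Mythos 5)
Your proof is correct, but it follows a genuinely different route from the paper's. The paper first invokes strong consistency (Theorem \ref{Consistency thm}) together with the converse Hampel theorem (Theorem \ref{hampel-huber conversed}) to reduce $\psi(|\cdot|)$-robustness to $\psi(|\cdot|)$-weak continuity of $\cR_\rho$ at $\delta_0$, and then refutes continuity with a sequence $\mu_n=\bP\circ(-X_n)^{-1}$, $X_n=(Y-n)^+\wedge a_n$, built from a single random variable $Y\ge0$ with $\bE[\psi(Y)]<\infty$ but $\bE[Y]=\infty$; the lower bound $\cR_\rho(\mu_n)\ge\rho(-n)\to\infty$ comes from the same dilatation-monotonicity inequality \cite[Lemma 2.3]{Schied2004} that you use. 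You instead work directly at the level of sampling distributions: a Huber-style two-point contamination $\nu_n=(1-\eps_n)\delta_0+\eps_n\delta_{-a_n}$ with $\eps_n=1/\sqrt{a_n\psi(a_n)}$, calibrated so that $\eps_n\psi(a_n)\to0$ (hence $d_\psi(\delta_0,\nu_n)\to0$ and $\{\delta_0,\nu_1,\nu_2,\dots\}$ is uniformly $\psi$-integrating) while $\eps_n a_n\to\infty$ (hence the empirical risk is displaced), with a law of large numbers for the binomial count of catastrophic observations closing the argument. Your verifications of the metric and uniform-integrability claims are correct, the normalization $\rho(0)=0$ is harmless, and the Prohorov lower bound via $A=\{0\}$ is valid, so the contradiction with Definition \ref{def psi robustness of rf} is genuine. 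What each approach buys: the paper's is shorter given that Theorems \ref{Consistency thm} and \ref{hampel-huber conversed} are already available, whereas yours is self-contained and more elementary --- it needs neither the consistency theorem nor the converse Hampel machinery (in particular not its weak-consistency hypothesis), only the dilatation-monotonicity bound, which is the common engine of both proofs. Both constructions exploit the same phenomenon, namely that $\liminf_{x\ua\infty}\psi(x)/x=0$ permits probability mass far in the tail at negligible $\psi$-cost but with unbounded effect on the mean.
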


\medskip

The significance of the preceding proposition is that it allows us to essentially limit the analysis of the $\psi$-robustness of risk functionals to weight functions  $\psi(x)=\Psi(|x|)$ arising from a Young function $\Psi$. In this context, we have the following result.

\medskip

\begin{theorem}\label{robust corollary}For a finite Young function $\Psi$, the following conditions are equivalent.
\begin{enumerate}
\item  For every law-invariant convex risk measure $\rho$  on $H^\Psi$,  $\cR_\rho$ is $\Psi $-robust on $\cM(H^\Psi)$.
\item $\Psi$ satisfies the $\Delta_2$-condition \eqref{Delta2}.
\end{enumerate}
\end{theorem}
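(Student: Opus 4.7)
The plan is to deduce Theorem \ref{robust corollary} from Theorem \ref{continuity of convex risk measures} by combining it with the general Hampel-type theorem and its converse that are announced for Section \ref{Robust section}. The guiding intuition is that $\Psi$-robustness of $\cR_\rho$ on uniformly $\Psi$-integrating subsets of $\cM_1^\Psi$ should be equivalent to continuity of $\cR_\rho$ in the $\Psi$-weak topology (metrized by $d_\Psi$), because the $\Psi$-weak topology simultaneously controls the body of a distribution (via Prohorov) and its tail (via $\int\Psi(|\cdot|)\,d\mu$), which is exactly what the plug-in estimator $\wh\rho_n=\cR_\rho(\wh m_n)$ needs for stability.

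For the implication (b) $\Rightarrow$ (a), assume that $\Psi$ satisfies the $\Delta_2$-condition. By the remark preceding Theorem \ref{continuity of convex risk measures} we have $\cM(H^\Psi)=\cM_1^\Psi$, and Theorem \ref{continuity of convex risk measures} yields that $\cR_\rho$ is continuous on $\cM_1^\Psi$ for the $\Psi$-weak topology, hence for $d_\Psi$. Given an arbitrary uniformly $\Psi$-integrating set $\cN\subset\cM_1^\Psi$, the Hampel-type theorem from Section \ref{Robust section} will then establish $\Psi$-robustness on $\cN$: the $d_\Psi$-continuity of $\cR_\rho$ translates, via a uniform Glivenko--Cantelli argument for the empirical measures $\wh m_n$ in the $d_\Psi$-metric on uniformly $\Psi$-integrating families, into the uniform-in-$n$ estimate \eqref{robustness def} for the Prohorov distance between $\bP_\mu\circ\wh\rho_n^{-1}$ and $\bP_\nu\circ\wh\rho_n^{-1}$.

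For the converse (a) $\Rightarrow$ (b) I would argue by contraposition. Suppose $\Psi$ fails the $\Delta_2$-condition. Then Theorem \ref{continuity of convex risk measures} furnishes a law-invariant convex risk measure $\rho$ on $H^\Psi$ whose risk functional $\cR_\rho$ is not $\Psi$-weakly continuous at some point $\mu\in\cM(H^\Psi)$. Hence there is a sequence $(\mu_n)\subset\cM(H^\Psi)$ with $d_\Psi(\mu_n,\mu)\to 0$ but $\cR_\rho(\mu_n)\not\to\cR_\rho(\mu)$. The set $\cN:=\{\mu\}\cup\{\mu_n:n\in\bN\}$ is $\Psi$-weakly relatively compact, hence uniformly $\Psi$-integrating by Lemma \ref{charact of compact sets}. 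The converse Hampel-type theorem from Section \ref{Robust section} applied on $\cN$ shows that $\Psi$-robustness of $\cR_\rho$ on $\cN$ would force continuity of $\cR_\rho$ at $\mu$ along the sequence $(\mu_n)$, contradicting the construction. Therefore (a) fails.

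The main obstacle lies not in the present deduction but in the two Hampel-type results it invokes. The difficulty is to upgrade pointwise $d_\Psi$-continuity of $\cR_\rho$ into uniform control of the sampling laws $\bP_\nu\circ\wh\rho_n^{-1}$, which requires a uniform strong law of large numbers guaranteeing that $\wh m_n\to\nu$ in $d_\Psi$ uniformly in $\nu\in\cN$; the role of uniform $\Psi$-integrability is precisely to make the tail term $|\int\Psi(|\cdot|)\,d\wh m_n-\int\Psi(|\cdot|)\,d\nu|$ controllable uniformly over $\cN$, and it is this technical point that will have to be handled carefully in the Hampel-type theorems of Section \ref{Robust section}.
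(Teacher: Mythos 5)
Your proposal follows essentially the same route as the paper: for (b)$\Rightarrow$(a) it combines Theorem \ref{continuity of convex risk measures} with the Hampel-type Theorem \ref{hampel theorem} (using $\cM(H^\Psi)=\cM_1^\Psi$ under $\Delta_2$), and for (a)$\Rightarrow$(b) it runs the converse Hampel Theorem \ref{hampel-huber conversed} backwards through Theorem \ref{continuity of convex risk measures}. The one point you should make explicit is the weak-consistency hypothesis of Theorem \ref{hampel-huber conversed}, which the paper verifies by invoking the strong consistency of $\wh\rho_n$ from Theorem \ref{Consistency thm} before applying the converse Hampel theorem.
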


\medskip

As in Theorem \ref{extension}, $H^\Psi$ may not be the \lq\lq canonical"\ space for $\rho$ in the sense that $\rho$ can be extended to a larger space. Such a situation has an impact on the robustness of $\rho$ as explained in the next result. By $\bbar\rho$ we denote again the extension \eqref{risk measure extension}.
\medskip

\begin{theorem}\label{Linfty robust thm}Let $\Psi$ be a finite Young function satisfying the $\Delta_2$-condition \eqref{Delta2}. For  a law-invariant convex risk measure $\rho$ on $L^\infty$, the following conditions are equivalent.
\begin{enumerate}
\item $\cR_{\bbar\rho}$ is $\Psi $-robust  on $\cM_1^\Psi$.

\item $\cR_\rho$ is $\Psi $-robust   on  $\cM(L^\infty)$.
\item  $\bbar\rho$ is finite on $H^\Psi$.
\end{enumerate}
\end{theorem}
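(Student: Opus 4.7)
The plan is to prove the equivalence by a cyclic chain $(3)\Rightarrow(1)\Rightarrow(2)\Rightarrow(3)$, using Theorems~\ref{extension} and \ref{robust corollary} together with the (strengthened) converse Hampel-type theorem announced for Section~\ref{Robust section}. The $\Delta_2$-assumption is crucial because it makes $\cM(H^\Psi)=\cM_1^\Psi$ (by the remark preceding Theorem~\ref{continuity of convex risk measures}) and gives $H^\Psi=L^\Psi$.

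For $(3)\Rightarrow(1)$, I would invoke the observation from the paragraph preceding Theorem~\ref{extension}: if $\bbar\rho$ is finite on $H^\Psi$, then its restriction to $H^\Psi$ is a law-invariant convex risk measure there (monotonicity and convexity are inherited, and the $L^\infty$-cash additivity extends to $H^\Psi$ via the Luxemburg-norm continuity of \cite[Theorem~4.1]{CheriditoLi2009}). Theorem~\ref{robust corollary} then immediately gives $\Psi$-robustness of $\cR_{\bbar\rho}$ on $\cM(H^\Psi)=\cM_1^\Psi$. For $(1)\Rightarrow(2)$, I would observe that $\cM(L^\infty)\subset\cM_1^\Psi$ (every $\mu\in\cM(L^\infty)$ has bounded support), that $\cR_\rho$ and $\cR_{\bbar\rho}$ coincide on $\cM(L^\infty)$, and that for $\mu\in\cM(L^\infty)$ the empirical measure $\wh m_n$ is $\bP_\mu$-a.s.\ finitely supported, hence lies in $\cM(L^\infty)$. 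Consequently $\wh\rho_n=\cR_\rho(\wh m_n)=\cR_{\bbar\rho}(\wh m_n)$ and the defining implication \eqref{robustness def} for $\Psi$-robustness of $\cR_\rho$ on any uniformly $\Psi$-integrating $\cN\subset\cM(L^\infty)$ is literally identical to the one for $\cR_{\bbar\rho}$ on the same $\cN\subset\cM_1^\Psi$. So (1) restricts to (2).

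The main work lies in $(2)\Rightarrow(3)$. My strategy is to use $\Psi$-robustness of $\cR_\rho$ on $\cM(L^\infty)$ to deduce the $\Psi$-weak continuity of $\cR_\rho$ on $\cM(L^\infty)$, which by the equivalence $(3)\Leftrightarrow(1)$ of Theorem~\ref{extension} forces $\bbar\rho$ to be finite on $H^\Psi$. To obtain this continuity, take any sequence $\mu_n\to\mu$ in the $\Psi$-weak topology with $\mu_n,\mu\in\cM(L^\infty)$. The set $\cN:=\{\mu\}\cup\{\mu_n:n\ge 1\}$ is relatively compact in the $\Psi$-weak topology, hence uniformly $\Psi$-integrating by Lemma~\ref{charact of compact sets}. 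By hypothesis, $\cR_\rho$ is robust with respect to $d_\Psi$ and $d_{\mbox{\scriptsize{\rm Proh}}}$ on $\cN$. At this point I would invoke the converse Hampel-type theorem from \cite{Kraetschmer et al 2012} in the strengthened version announced for Section~\ref{Robust section} to conclude that $\cR_\rho$ is continuous at $\mu$ along sequences in $\cN$, so that $\cR_\rho(\mu_n)\to\cR_\rho(\mu)$. Since $\cR_\rho$ is trivially finite on $\cM(L^\infty)$ (as $\rho$ is finite on $L^\infty$), condition~(3) of Theorem~\ref{extension} holds, and (3) of the present theorem follows.

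The main obstacle is precisely this last step: the converse Hampel theorem has to be sharp enough to transfer robustness on the particular uniformly $\Psi$-integrating set $\cN$ into pointwise continuity of $\cR_\rho$ at $\mu$ along sequences in $\cN$, not merely into some global continuity statement on all of $\cM_1^\Psi$. Everything else in the cycle is a comparatively routine matter of restriction, of identifying $\cR_\rho$ with $\cR_{\bbar\rho}$ on $\cM(L^\infty)$, and of citing the previously established Theorems~\ref{extension} and \ref{robust corollary}.
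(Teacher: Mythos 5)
Your proposal follows essentially the same route as the paper: (c)$\Rightarrow$(a) via Theorem~\ref{robust corollary} after observing that $\bbar\rho$ restricts to a law-invariant convex risk measure on $H^\Psi$, (a)$\Rightarrow$(b) by restriction, and (b)$\Rightarrow$(c) by deducing $\Psi$-weak continuity of $\cR_\rho$ on $\cM(L^\infty)$ from the converse Hampel theorem (Theorem~\ref{hampel-huber conversed}) and then applying Theorem~\ref{extension}. The one ingredient you leave unverified is the weak-consistency hypothesis of Theorem~\ref{hampel-huber conversed}, which the paper supplies by citing Theorem~\ref{Consistency thm}; with that noted, your argument is complete.
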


\medskip
As a consequence of Theorem \ref{Linfty robust thm} along with Theorem \ref{extension}, robustness of a risk functional derived from a risk measure $\rho$ defined on an Orlicz heart may be reduced to continuity at $0$ of the restriction $\rho|_{L^\infty}$ of $\rho$ to $L^{\infty}$. More precisely, for any risk measure on the Orlicz heart $H^{\Psi}$ associated with a finite Young function $\Psi$ satisfying the $\Delta_2$-condition \eqref{Delta2}, we have that
$$
\cR_{\rho}~\mbox{is}~\Psi\mbox{-robust}\text{ if and only if }\rho|_{L^{\infty}}\mbox{ is continuous at}~0~\mbox{w.r.t.}~\|\cdot\|_{\Psi}.
$$
The most important aspect of Theorem \ref{Linfty robust thm} is that it allows us to study the robustness properties of a given risk functional on  $\cM(L^\infty)$ rather than on its full domain. Since any risk functional that arises from a law-invariant convex risk measure is defined on $\cM(L^\infty)$, we can thus compare two risk functionals in regard to their degree of robustness.

\medskip

\begin{definition}[Comparative robustness]\label{Comparative robustness def}Suppose that  $\rho_1$ and $\rho_2$ are two law-invariant convex risk measures on $L^\infty$. We will say that $\rho_1$ is \emph{at least as robust as $\rho_2$} if the following implication holds. When $\Psi$ is a finite Young function satisfying the $\Delta_2$-condition \eqref{Delta2}, and $\cR_{\rho_2}$ is $\Psi$-robust  on $\cM(L^\infty)$, then $\cR_{\rho_1}$ is $\Psi$-robust  on $\cM(L^\infty)$. When, in addition, there is a 
finite $\Psi$ such that $\cR_{\rho_1}$ is $\Psi$-robust  on $\cM(L^\infty)$ but $\cR_{\rho_2}$ is not, then we will say that $\rho_1$ is \emph{more robust than $\rho_2$}.
\end{definition}

\medskip

We immediately get the following corollary.

\medskip

\begin{corollary}
\label{robustcomparison}
For two  law-invariant convex risk measures $\rho_1$ and $\rho_2$ on $L^\infty$, the following conditions are equivalent.
\begin{enumerate}
\item $\rho_1$ is at least as robust as $\rho_2$.
\item When the finite Young function $\Psi$ satisfies the $\Delta_2$-condition \eqref{Delta2} and $\bbar\rho_2$ is finite on $H^\Psi$, then $\bbar\rho_1$ is also finite on $H^\Psi$.
\end{enumerate}
\end{corollary}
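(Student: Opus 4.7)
The proof is essentially a direct two-way application of Theorem \ref{Linfty robust thm}, which has already pinned down $\Psi$-robustness of $\cR_\rho$ on $\cM(L^\infty)$ as being equivalent to finiteness of the canonical extension $\bbar\rho$ on the Orlicz heart $H^\Psi$ (for finite $\Psi$ satisfying the $\Delta_2$-condition). The strategy is therefore to translate each direction of the claimed equivalence through that theorem; no further probabilistic or functional-analytic input should be needed.

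For the implication (i) $\Rightarrow$ (ii), I would fix a finite Young function $\Psi$ satisfying \eqref{Delta2} such that $\bbar\rho_2$ is finite on $H^\Psi$. Applying the equivalence (ii)$\Leftrightarrow$(iii) of Theorem \ref{Linfty robust thm} to $\rho_2$ yields that $\cR_{\rho_2}$ is $\Psi$-robust on $\cM(L^\infty)$. By assumption (i) in Definition \ref{Comparative robustness def}, this transfers to $\cR_{\rho_1}$: the functional $\cR_{\rho_1}$ is $\Psi$-robust on $\cM(L^\infty)$. Applying Theorem \ref{Linfty robust thm} to $\rho_1$ in the reverse direction then gives finiteness of $\bbar\rho_1$ on $H^\Psi$, which is exactly (ii).

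For the converse (ii) $\Rightarrow$ (i), I would start from any finite Young function $\Psi$ satisfying \eqref{Delta2} such that $\cR_{\rho_2}$ is $\Psi$-robust on $\cM(L^\infty)$. Theorem \ref{Linfty robust thm} applied to $\rho_2$ gives that $\bbar\rho_2$ is finite on $H^\Psi$. The hypothesis (ii) then forces $\bbar\rho_1$ to be finite on $H^\Psi$ as well, whereupon Theorem \ref{Linfty robust thm} applied to $\rho_1$ produces the $\Psi$-robustness of $\cR_{\rho_1}$ on $\cM(L^\infty)$. This is precisely what is required for $\rho_1$ to be at least as robust as $\rho_2$ in the sense of Definition \ref{Comparative robustness def}.

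There is no real obstacle here beyond making sure the logical quantifiers in Definition \ref{Comparative robustness def} are matched correctly with those in Theorem \ref{Linfty robust thm}: the definition quantifies over all finite $\Psi$ satisfying \eqref{Delta2} with $\cR_{\rho_2}$ being $\Psi$-robust, and the theorem provides, for each such $\Psi$ individually, the needed equivalence between $\Psi$-robustness on $\cM(L^\infty)$ and finiteness of $\bbar\rho$ on $H^\Psi$. Once this correspondence is set up, both directions reduce to two one-line substitutions, and the corollary follows.
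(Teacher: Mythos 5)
Your proposal is correct and is exactly the argument the paper has in mind: the corollary is stated as an immediate consequence of Theorem \ref{Linfty robust thm}, and translating both directions of Definition \ref{Comparative robustness def} through the equivalence (b)$\Leftrightarrow$(c) of that theorem is all that is needed. Your attention to matching the quantifiers over finite Young functions satisfying the $\Delta_2$-condition is the only point of care, and you handle it properly.
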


\medskip
\begin{example}
Since $H^{\Psi}\subset L^{1}$ for any finite Young function $\Psi,$ the ordinary expectation $\rho_1(X):=\ex[\,-X\,]$ is as least robust as any other law-invariant convex risk measure $\rho_{2}$ on $L^{\infty}.$ Furthermore, obviously $\rho_{1}$ is more robust than every risk measure $\rho_{2}$ defined in (\ref{Risk measure based on one-sided moments - eq}) with $p > 1$ and $a>0.$
\hfill$\diamondsuit$
\end{example}

\medskip

Instead of comparing the robustness of two risk measures with respect to all possible weight functions $\psi$, it makes sense to restrict the attention to the weight functions $\psi_p(x)=|x|^p/p$ for $0< p<\infty$. This leads to the following notion that was first proposed in \cite{Kraetschmer et al 2012} within a more general  context.

\medskip

\begin{definition}[Index of qualitative robustness]\label{Index of qualitative robustness}
Let $\rho$ be a law-invariant convex risk measure on $L^\infty$. The associated \emph{index of qualitative robustness} is defined as
$$
\text{iqr}(\rho) = \Big(\inf\big\{p\in (0,\infty)\,:\,
   \text{$\cR_\rho$
    is $\psi_p$-robust on $\cM(L^\infty)$}\big\}\Big)^{-1}.
$$\end{definition}

\medskip

It follows from Proposition \ref{psi linear growth Prop} that any law-invariant convex risk measure $\rho$ satisfies $\text{iqr}(\rho) \le1$. Thus, Theorem \ref{Linfty robust thm} implies that
\begin{equation}\label{representation of index of qualitative robustness}
\text{iqr}(\rho) = \Big(\inf\big\{p\in [1,\infty)\,:\,
   \text{$\bbar\rho$
    is finite on $L^p$}\big\}\Big)^{-1}.
\end{equation}



\medskip

\begin{example}[Risk measure based on one-sided moments]\label{Risk measure based on one-sided moments - part two}
For the risk measure $\rho$ defined in (\ref{Risk measure based on one-sided moments - eq}) with $p\in[1,\infty)$ and $a>0$ we obviously have ${\rm iqr}(\rho)=1/p$. \hfill$\diamondsuit$
\end{example}

\subsection{Index of qualitative robustness for  distortion risk measures}\label{distortion index}

We now turn to the  important example class of \emph{distortion risk measures}
 defined as
\begin{equation}\label{definition of distortion rm}
    \rho_g(X):=\int_{-\infty}^0 g(F_X(y))\,dy-\int_0^\infty \big(1-g(F_X(y))\big)\,dy,
\end{equation}
where $F_X$ denotes the distribution function of $X$, and $g$ is a nondecreasing function such that $g(0)=0$ and $g(1)=1$; see e.g. \cite{Yaari,Denneberg1990,Wang1996,Kusuoka,FoellmerSchied2011}.  It is a consequence of a theorem by Dellacherie and Schmeidler that $\rho_g$  is a law-invariant convex risk measure on $L^\infty$ if and only if $g$ is concave (see Proposition 4.75 and Theorem 4.94 in \cite{FoellmerSchied2011} for a proof).  In this case, $\rho_g$ is even coherent and can be represented as
\begin{equation}\label{spectral representation}
\rho_g(X)=g(0+)\esssup(-X)+\int_0^1\vatr_t(X) g'_+(t)\,dt,\qquad X\in L^\infty,
\end{equation}
where $g'_+$ is the right-hand derivative of $g$ and  $\vatr_t(X):=-\inf\{y:F_X(y)>t\}$ denotes the Value at Risk at level $t$; see, e.g., \cite[Theorem 4.70]{FoellmerSchied2011}.
It was pointed out in \cite{Contetal} that in this case $\rho_{g}$ cannot be qualitatively $\eins$-robust. On the other hand, the risk functionals of two different concave distortion risk measures may have a rather different behavior in regard to qualitative robustness.  This is the content of the following result, which in its statement  uses the Orlicz space $L^\Phi[0,1]$ over the standard Lebesgue space on the unit interval with respect to a 
Young function $\Phi$.

\medskip

\begin{proposition}\label{distortion iqr Prop} Suppose that $g$ is concave and continuous and let furthermore $\Psi$ be a finite Young function. Then the following conditions are equivalent.
\begin{enumerate}
\item $\bbar\rho_g$ is finite on $H^\Psi$.
\item  $\bbar\rho_g$ is finite on $L^\Psi$. 
\item $g'_+\in L^{\Psi^*}[0,1]$,
where  $\Psi^*(y)=\sup_{x\ge0}(xy-\Psi(x))$ is the conjugate Young function of $\Psi$.
\end{enumerate}
In particular, when $\Psi$ satisfies the $\Delta_2$-condition \eqref{Delta2}, then $\cR_{\rho_g}$ is $\Psi $-robust   on  $\cM(L^\infty)$ if and only if condition {\rm (c)} is satisfied. Moreover, 
$$\text{\rm iqr}(\rho_g)=\frac{q^*-1}{q^*}\qquad\text{where}\qquad
q^*=\sup\Big\{q\ge1\,\Big|\,\int_0^1(g'_+(t))^q\,dt<\infty\Big\}.
$$
\end{proposition}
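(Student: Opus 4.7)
Because $g$ is continuous and concave with $g(0)=0$, we have $g(0+)=0$, so the spectral representation \eqref{spectral representation} reduces to $\rho_g(X)=\int_0^1 \vatr_t(X)\,g'_+(t)\,dt$ for $X\in L^\infty$. My plan is to first identify $\bar\rho_g$ with the same integral formula on its natural domain, and then to convert finiteness of $\bar\rho_g$ into an Orlicz-duality statement about $g'_+$. For the identification, I would show that $\wt\rho(X):=\int_0^1 \vatr_t(X)\,g'_+(t)\,dt$ is convex, monotone, and $L^1$-lower semicontinuous on $L^1$ (via Fatou and the $L^1$-continuity of quantiles), and coincides with $\rho_g$ on $L^\infty$; uniqueness of the extension \eqref{risk measure extension} then forces $\bar\rho_g=\wt\rho$ wherever $\wt\rho<\infty$. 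The key quantitative input is that, on the atomless base space, $t\mapsto\vatr_t(X)$ regarded as a random variable on $((0,1),\text{Leb})$ has the same law as $-X$ under $\bP$, so that $\|\vatr_\cdot(X)\|_{L^\Psi[0,1]}=\|X\|_\Psi$ for every $X\in L^\Psi$.

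For the easy directions (c)$\Rightarrow$(b)$\Rightarrow$(a), I would apply the generalised H\"older inequality for Orlicz spaces, $\int_0^1 |fh|\,dt\le 2\|f\|_\Psi\|h\|_{\Psi^*}$, with $f=\vatr_\cdot(X)$ and $h=g'_+$, to obtain finiteness of $\bar\rho_g$ on $L^\Psi$; the inclusion $H^\Psi\subset L^\Psi$ then gives (b)$\Rightarrow$(a).

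The hard direction (a)$\Rightarrow$(c) proceeds as follows. Assuming $\bar\rho_g<\infty$ on $H^\Psi$, the result \cite[Theorem 4.1]{CheriditoLi2009} mentioned after \eqref{risk measure extension} yields Luxemburg-norm continuity on $H^\Psi$, so there exist $\delta>0$ and $C>0$ with $\bar\rho_g(X)\le C$ for every $X\in H^\Psi$ satisfying $\|X\|_\Psi\le\delta$. Fixing a uniformly distributed $U$ on the atomless base space, for any bounded nonincreasing $f:[0,1]\to[0,\infty)$ with $\|f\|_{L^\Psi[0,1]}\le\delta$ I would set $X:=-f(U)\in L^\infty$ and verify $\vatr_t(X)=f(t)$ almost everywhere, so that
\[
\int_0^1 f(t)\,g'_+(t)\,dt=\rho_g(X)=\bar\rho_g(X)\le C.
\]
Monotone convergence on the truncations $f\wedge n\uparrow f$ removes boundedness of $f$, and since $g'_+$ is nonincreasing, the Hardy--Littlewood rearrangement inequality shows that dropping the monotonicity constraint on $f$ does not increase the supremum. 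The resulting bound $\sup\{\int_0^1 fg'_+\,dt\,:\,f\ge 0,\ \|f\|_\Psi\le\delta\}\le C$ is the standard Orlicz-duality characterisation of $\|g'_+\|_{\Psi^*}<\infty$, which proves (c).

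Given the equivalence (a)$\Leftrightarrow$(b)$\Leftrightarrow$(c), the robustness claim is immediate from Theorems \ref{Linfty robust thm} and \ref{extension}. For the index formula I would specialise to $\Psi_p(x)=x^p/p$, whose conjugate is $\Psi_p^*(y)=y^q/q$ with $1/p+1/q=1$, so the equivalence reads: $\bar\rho_g$ is finite on $L^p$ iff $g'_+\in L^q[0,1]$. Plugging this into the representation \eqref{representation of index of qualitative robustness} and solving the algebraic relation $p^*=q^*/(q^*-1)$ yields $\text{iqr}(\rho_g)=1/p^*=(q^*-1)/q^*$. The main obstacle is the hard direction: one must translate a ball-wise bound on the risk functional into an $L^{\Psi^*}$-bound on the spectral density, which requires both the rearrangement step to drop monotonicity of the test functions and the monotone-convergence step to drop boundedness, plus some care with the precise Orlicz-duality characterisation of $\|\cdot\|_{\Psi^*}$.
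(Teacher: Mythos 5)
Your proposal is correct, and its overall skeleton (spectral representation, Orlicz--H\"older for (c)$\Rightarrow$(b)$\Rightarrow$(a), duality plus Hardy--Littlewood for (a)$\Rightarrow$(c), specialisation to $\Psi_p$ for the index) matches the paper's. The one genuinely different ingredient is how you execute the hard direction. The paper works with $Y:=g'_+(1-U)$ and invokes the identification $L^{\Psi^*}=(H^\Psi)^*$ from \cite[Theorem 2.2.11]{EdgarSucheston1992} together with a Banach--Steinhaus/uniform-boundedness argument: the \emph{pointwise} finiteness of $\bE[(-X)Y]\le\bbar\rho_g(X)<\infty$ for every $X\in H^\Psi$ (via the upper Hardy--Littlewood inequality) already forces $Y\in L^{\Psi^*}$, with no quantitative bound needed. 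You instead first upgrade finiteness to a uniform bound on a Luxemburg ball via \cite[Theorem 4.1]{CheriditoLi2009}, then test against nonincreasing $f$, remove boundedness by monotone convergence and monotonicity by rearrangement, and conclude through the norm-duality characterisation of $\|\cdot\|_{\Psi^*}$. Both are sound; the paper's route is softer (it never needs the continuity theorem for this step and lets the closed-graph machinery absorb the quantitative work), while yours is more hands-on and makes the constant $\|g'_+\|_{\Psi^*}\lesssim C/\delta$ explicit. Your plan to establish the formula $\bbar\rho_g(X)=\int_0^1 q_{-X}(t)\,g'_+(1-t)\,dt$ on $L^1$ by exhibiting this integral as a monotone convex $L^1$-lower-semicontinuous extension and invoking uniqueness from \cite{FilipovicSvindland} is also fine, though note that the Fatou step needs a little care where $q_{-X}$ is negative and $g'_+(1-\cdot)$ is unbounded (split off $q_{-X}^-$, which only meets the bounded part of the weight); the paper sidesteps this by citing \cite{FilipovicSvindland} directly for the validity of \eqref{spectral representation} on $L^1$.
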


\medskip

\begin{example}[Average Value at Risk]The risk measure Average Value at Risk at level $\alpha\in(0,1)$, $  \avatr_\alpha$, is given in terms of the concave distortion function $g_1(t)=(t/\alpha)\wedge1$ (see \cite[Example 4.71]{FoellmerSchied2011}). $ \avatr$ is also called Expected Shortfall, Conditional Value at Risk, or TailVaR. Since $g'_1$ is bounded, it  follows from Proposition \ref{distortion iqr Prop} that ${\rm iqr}(\avatr_\alpha)=1$.

More generally, we can consider the distortion function $g_\beta(t)=(t/\alpha)^\beta\wedge1$ for some $\beta\in(0,1]$.  It is easy to see that the corresponding risk measure $\rho_{g_\beta}$ has ${\rm iqr}(\rho_{g_\beta})=\beta$. It follows already from this simple example that  distortion risk measures   cover the whole possible range of our index of qualitative robustness.
\hfill$\diamondsuit$
\end{example}

\medskip

\begin{example}[MINMAXVAR] In \cite{ChernyMadan,ChernyMadan2} the concave distortion risk measures MINVAR, MAXVAR, MINMAXVAR, and MAXMINVAR were introduced. They play an important role in conic finance \cite{ChernyMadan2}. For instance, MINMAXVAR is defined in terms of the concave distortion function
$$g_{\lambda,\gamma}(t)=1-(1-t^{\frac1{1+\lambda}})^{1+\gamma},
$$
where $\lambda$ and $\gamma$ are nonnegative parameters. An easy computation shows that $g'_{\lambda,\gamma}(t)\sim c\cdot t^{-\frac\lambda{1+\lambda}}$ as $t\downarrow0$, and so we have $\text{iqr}(\text{MINMAXVAR})=\frac1{1+\lambda}$.  \hfill$\diamondsuit$

\end{example}

\section{Some general results}

In this section, we will state and prove some theorems that are crucial ingredients for the proofs of the results from Section \ref{main results section}. These theorems and propositions are interesting in their own right and stated in the general contexts of robust statistics and $\psi$-weak convergence.

\subsection{Hampel-type theorems for the $\bm\psi$-weak topology}\label{Robust section}

Hampel \cite{Hampel} introduced the notion of qualitative robustness with the weak topology in mind.  In Section \ref{qualitative robustness}  we have argued that it is necessary to replace the weak topology with a finer $\psi$-weak topology so as to obtain a more balanced picture of the robustness of a risk functional.
Such an approach was first suggested in  \cite{Kraetschmer et al 2012}. In this section, we will give versions of Hampel's theorem and its converse for the $\psi$-weak topology that are slightly stronger than the corresponding results in \cite{Kraetschmer et al 2012}. We   need them as basis for the results in Section \ref{qualitative robustness}, but we will state them here in the  framework of robust statistics rather than in the narrower context of risk functionals. For the sake of consistency with the preceding sections, we have chosen a one-dimensional setting, but we could just as well have stated our results in the even more general multivariate framework of \cite{Kraetschmer et al 2012}.

As in Section \ref{qualitative robustness}, we consider the canonical product space $\Omega:=\bR^\bN$ with Borel field $\cF$ and coordinate mappings $(X_n)$, which become i.i.d.~random variables under a product measure $\bP_\mu=\mu^{\otimes\bN}$. A \emph{statistical functional} will be a map $T:\cM\to\bR$, where $\cM\subset\cM_1$ must contain all measures of the form $\frac1n\sum_{k=1}^n\delta_{x_k}$ for $n\in\bN$ and $x_1,\dots, x_n\in\bR$. It gives rise to a sequence of estimators given by $\wh T_n:=T(\wh m_n)$, where $\wh m_n$ is the empirical distribution of $X_1,\dots, X_n$ as in \eqref{empirical distribution}. Clearly, the risk functional $\cR_\rho$ associated with a  law-invariant risk measure $\rho$ is an example of a statistical functional.  Also, recall from Definition \ref{def uniformly psi integrating} the notion of a uniformly $\psi$-integrating set. The following definition is a modified version of \cite[Definition 2.1]{Kraetschmer et al 2012}.

\medskip

\begin{definition}[$\psi$-robustness]\label{def quali rob}
Let $T$ be a statistical functional and ${\cal M}$ be a subset of $ {\cal M}_1^\psi$.
 Then $T$ is called $\psi$-robust at $\mu$ in ${\cal M}$ if for each $\varepsilon>0$ and every uniformly $\psi$-integrating set ${\cal N}\subset\cM$ with $\mu\in\cN$ there are $\delta>0$ and $n_0\in\N$ such that
$$
    \nu\in{\cal N},\  d_\psi(\mu,\nu)\le\delta\quad\Longrightarrow\quad d_{\text{Proh}}(\pr_\mu\circ \widehat T_n^{-1}\,,\,\pr_\nu\circ \widehat T_n^{-1})\le\varepsilon\quad\text{for }n\ge n_0.
$$
\end{definition}
\medskip


The following theorem provides a version of Hampel's theorem that is stronger than \cite[Corollary 3.6]{Kraetschmer et al 2012}, the corresponding result in \cite{Kraetschmer et al 2012}.

\medskip

\begin{theorem}[Hampel's theorem for the $\psi$-weak topology]\label{hampel theorem}
Let $T:\cM\to\bR$ be a statistical functional where  ${\cal M}\subset {\cal M}_1^\psi$. When  $T:\cM\to\bR$ is $\psi$-weakly continuous at $\mu\in\cM$, then $T$ is $\psi$-robust at $\mu$ in $\cM$.
\end{theorem}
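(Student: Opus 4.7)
The plan is to reduce $\psi$-robustness to a uniform concentration statement: show that for every $\eps>0$ there exist $\delta>0$ and $n_0\in\bN$ such that for all $\nu\in\cN$ with $d_\psi(\mu,\nu)\le\delta$ and all $n\ge n_0$,
\[
\bP_\nu\big[\,|\wh T_n-T(\mu)|>\eps\,\big]\,\le\,\eps.
\]
Since $\mu\in\cN$ by assumption, this bound applies simultaneously to $\nu$ and to $\mu$; interpreted as $d_{\text{Proh}}(\bP_\nu\circ\wh T_n^{-1},\delta_{T(\mu)})\le\eps$ and $d_{\text{Proh}}(\bP_\mu\circ\wh T_n^{-1},\delta_{T(\mu)})\le\eps$, the triangle inequality for the Prohorov metric then yields $d_{\text{Proh}}(\bP_\mu\circ\wh T_n^{-1},\bP_\nu\circ\wh T_n^{-1})\le 2\eps$, which is $\psi$-robustness after rescaling $\eps$.

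To obtain the concentration bound, the main intermediate step is a uniform Varadarajan-type lemma for $d_\psi$: on any uniformly $\psi$-integrating $\cN$,
\[
\lim_{n\to\infty}\,\sup_{\nu\in\cN}\,\bP_\nu\big[\,d_\psi(\wh m_n,\nu)>\eta\,\big]\,=\,0\qquad\text{for every }\eta>0.
\]
Since $d_\psi=d_{\text{Proh}}+|\int\psi\,d\cdot-\int\psi\,d\cdot|$, I split the verification into two parts. For the Prohorov summand, uniform $\psi$-integrability of $\cN$ together with $\psi\ge 1$ outside a compact set forces $\cN$ to be tight, and a uniform Glivenko--Cantelli result over the tight family delivers the required rate. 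For the integral summand, I would truncate $\psi$ at level $M$: the bounded part $\psi\wedge M$ admits a Hoeffding-type deviation inequality uniform in $\nu\in\cN$, while the tail $\int(\psi-\psi\wedge M)\,d\nu$ can be made arbitrarily small uniformly in $\nu$ by uniform $\psi$-integrability, after which a Markov bound controls the corresponding empirical tail uniformly. This is exactly where the uniformly $\psi$-integrating hypothesis is indispensable, consistent with the motivation announced in Remark \ref{remark on UGC and Chung-LLN}.

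Granted the uniform lemma, the rest is routine. The $\psi$-weak continuity of $T$ at $\mu$ on $\cM$ provides $\eta>0$ with $|T(\mu')-T(\mu)|\le\eps$ whenever $\mu'\in\cM$ and $d_\psi(\mu,\mu')\le\eta$; set $\delta:=\eta/2$ and choose $n_0$ so that $\sup_{\nu\in\cN}\bP_\nu[d_\psi(\wh m_n,\nu)>\eta/2]\le\eps$ for $n\ge n_0$. Since $\wh m_n$ is a finitely supported measure it lies in $\cM$, so for any $\nu\in\cN$ with $d_\psi(\mu,\nu)\le\delta$ the triangle inequality gives $d_\psi(\mu,\wh m_n)\le\eta$ on an event of $\bP_\nu$-probability at least $1-\eps$, on which $|\wh T_n-T(\mu)|\le\eps$. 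This is the desired concentration bound.

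The technical heart of the argument, and the step I expect to be the main obstacle, is the uniform LLN for $\int\psi\,d\wh m_n$ over the family $\{\bP_\nu\}_{\nu\in\cN}$; all other ingredients are either standard (uniform weak Glivenko--Cantelli on tight families, Prohorov--triangle manipulations) or a direct consequence of continuity of $T$.
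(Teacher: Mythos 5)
Your proposal follows essentially the same route as the paper's proof: both arguments reduce $\psi$-robustness to a uniform Glivenko--Cantelli property for $d_\psi$ on uniformly $\psi$-integrating sets, and both verify that property by splitting $d_\psi$ into its Prohorov summand and the summand $\big|\int\psi\,d\wh m_n-\int\psi\,d\nu\big|$. The paper outsources the pieces: the reduction from continuity plus UGC to robustness is \cite[Theorem 2.4]{Kraetschmer et al 2012} (your concentration-around-$T(\mu)$ argument with the Prohorov triangle inequality through $\delta_{T(\mu)}$ is a correct unpacking of it), the Prohorov summand is \cite[Lemma 4]{Mizera2010}, and the integral summand is Chung's uniform weak law of large numbers \cite{Chung1951}, whose proof is exactly your truncation/Hoeffding/Markov argument. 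So the structure and the role of the uniform $\psi$-integrability hypothesis are identical.

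One intermediate claim is false as stated: uniform $\psi$-integrability of $\cN$ does \emph{not} force tightness. A weight function is only required to satisfy $\psi\ge1$ outside some compact set; for bounded $\psi$ (e.g.\ $\psi\equiv1$, the classical Hampel setting that the theorem explicitly covers) every $\cN\subset\cM_1$ is uniformly $\psi$-integrating, and even for unbounded $\psi$ the condition only controls mass where $\psi$ is large, not where $|x|$ is large. Fortunately no tightness is needed: the uniform Glivenko--Cantelli property for the Prohorov metric holds over \emph{all} of $\cM_1$, which is precisely \cite[Lemma 4]{Mizera2010}, the result the paper invokes. Replacing your tightness step by this stronger unconditional statement, your argument is complete and coincides with the paper's.
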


\begin{proof}The result will follow from \cite[Theorem 2.4]{Kraetschmer et al 2012} when we can show that every uniformly $\psi$-integrating set $\cN$ has the following \emph{uniform Glivenko--Cantelli (UGC) property:} for each $\eps>0$ and any $\delta>0,$ there is some $n_0\in\bN$ such that
\begin{equation}\label{psi UGC}
 \sup_{\nu\in{\cal N}}\,\pr_{\nu}\big[\,d_\psi(\nu,\wh m_n)\ge\delta\,\big]\,\le\,\varepsilon\qquad\text{for $n\ge n_0$.}
\end{equation}
According to \cite[Lemma 4]{Mizera2010}, the set  ${\cal M}_1$ has the UGC property for the Prohorov metric. Therefore, the UGC property \eqref{psi UGC} follows from (\ref{dpsi Definition Eq}) and the weak version of Chung's uniform (strong) law of large numbers (\cite{Chung1951}; see also \cite[Proposition A.5.1]{van der Vaart Wellner 1996}) applied to the sequence $(\psi(X_n))$ of random variables; notice that $\int\psi\,d\wh m_n=\frac{1}{n}\sum_{i=1}^n\psi(X_i)$ and $\int\psi\,d\nu=\ex_\nu[\psi(X_1)]$.\qed
\end{proof}

\medskip

\begin{remark}\label{remark on UGC and Chung-LLN}
We note that any subset ${\cal N}\subset{\cal M}_1^\psi$ possesses the UGC property with respect to $d_\psi$ in the sense of (\ref{psi UGC}) if and only if a uniform weak law of large numbers holds for the sequence $\psi(X_1),\psi(X_2),\ldots$ within ${\cal N}$ in the sense that for all $\eps>0$ and $\delta>0$ {there is} $n_0\in\bN$ such that
\begin{equation}\label{Chungs uniform WLLN}
 \sup_{\nu\in{\cal N}}\,\pr_{\nu}\Big[\,\Big|\frac{1}{n}\sum_{i=1}^n\psi(X_i)-\int\psi\,d\nu\Big|\ge\delta\,\Big]\,\le\,\varepsilon\qquad\text{for $n\ge n_0$}.
\end{equation}
This equivalence follows from the fact that ${\cal M}_1$ possesses the UGC property with respect to the Prohorov metric $d_{\mbox{\scriptsize{\rm Proh}}}$; see, for instance, \cite[Lemma 4]{Mizera2010}. We also note that the UGC property is the key for the proof of the Hampel-type criterion of Theorem \ref{hampel theorem}.
Moreover, it was shown in \cite[pp. 345f.]{Chung1951} that, at least under the additional assumption that the medians of $\psi$ under $\nu\in{\cal N}$ are uniformly bounded, the uniform law of large numbers \eqref{Chungs uniform WLLN} is actually equivalent to the fact that $\cal N$ is uniformly $\psi$-integrating. This shows that in Definition \ref{def quali rob} we may not avoid to restrict the choice of  $\nu$  to a uniformly $\psi$-integrating set $\cal N$.
\end{remark}

\medskip

The following result may be viewed as a converse of Hampel's theorem for the $\psi$-weak topology. Together with Theorem \ref{hampel theorem} and with the choice $\psi\equiv1$ and $\cM=\cM_1$ it yields the classical Hampel theorem in the form of \cite[Theorem 2.21]{Huber}.
Its statement uses the following notion of consistency: a statistical functional $T$ is called \emph{weakly consistent at $\mu\in\cM$} when $\wh T_n\to T(\mu)$ in $\bP_\mu$-probability.

\medskip


\begin{theorem}[Converse of Hampel's theorem for  the $\psi$-weak topology]\label{hampel-huber conversed}
Suppose that $T:\cM\to\bR$ is a statistical functional where $\cM\subset\cM_1^\psi$. Let $\mu\in{\cal M}$ and $\delta_0>0$ be given, and suppose that $T$ is weakly consistent at each $\nu$ in $\cM$ with $d_\psi(\nu,\mu)\le\delta_0$. When $T$ is $\psi$-robust at $\mu$ in $\cM$, then  $T:\cM\to\bR$ is $\psi$-weakly continuous at $\mu$.
\end{theorem}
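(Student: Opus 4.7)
The approach is by contraposition, paralleling the classical converse of Hampel's theorem (as in \cite[Theorem 2.21]{Huber}) but with the crucial twist that $\psi$-robustness may only be invoked on uniformly $\psi$-integrating sets. The plan is to assume that $T$ is not $\psi$-weakly continuous at $\mu$, extract a witness sequence $(\nu_n)\subset\cM$ with $d_\psi(\nu_n,\mu)\to 0$ and $|T(\nu_n)-T(\mu)|\ge\eta$ for some $\eta>0$, and derive a contradiction. After shrinking $\eta$ if necessary I may assume $\eta\le 1/2$, and after passing to a tail I may assume $d_\psi(\nu_n,\mu)\le\delta_0$ for all $n$, so that the weak-consistency hypothesis applies at every $\nu_n$ and at $\mu$.

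The first decisive step, and the one that departs from the classical scheme, is to exhibit a uniformly $\psi$-integrating set that contains the entire witness sequence. Since $d_\psi$ metrizes the $\psi$-weak topology on $\cM_1^\psi$ and $\nu_n\to\mu$, the set $\cN:=\{\nu_n:n\in\bN\}\cup\{\mu\}$ is $\psi$-weakly compact, hence relatively compact, so Lemma \ref{charact of compact sets} yields that $\cN$ is uniformly $\psi$-integrating. The hypothesis of $\psi$-robustness of $T$ at $\mu$ in $\cM$ is therefore applicable to $\cN$ with tolerance $\eta/6$, producing $\delta>0$ and $n_0\in\bN$ such that
$$
\nu\in\cN,\ d_\psi(\mu,\nu)\le\delta\ \Longrightarrow\ d_{\mbox{\scriptsize{\rm Proh}}}\big(\pr_\mu\circ\wh T_k^{-1},\,\pr_\nu\circ\wh T_k^{-1}\big)\le\eta/6\ \text{ for all }k\ge n_0.
$$

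To close the argument, fix some $\nu:=\nu_N$ with $d_\psi(\nu_N,\mu)\le\delta$ and use the weak consistency of $T$ at both $\mu$ and $\nu$: for $k$ sufficiently large the laws $\pr_\mu\circ\wh T_k^{-1}$ and $\pr_\nu\circ\wh T_k^{-1}$ assign mass at least $1-\eta/6$ to the open intervals $I_\mu:=(T(\mu)-\eta/6,T(\mu)+\eta/6)$ and $I_\nu:=(T(\nu)-\eta/6,T(\nu)+\eta/6)$, respectively. Because $|T(\nu)-T(\mu)|\ge\eta$, the enlargement $I_\mu^{\,\eta/6}$ is disjoint from $I_\nu$, so the defining inequality of the Prohorov distance applied to $A=I_\mu$ yields
$$
1-\tfrac{\eta}{6}\,\le\,\pr_\mu\big(\wh T_k\in I_\mu\big)\,\le\,\pr_\nu\big(\wh T_k\in I_\mu^{\,\eta/6}\big)+\tfrac{\eta}{6}\,\le\,\tfrac{\eta}{6}+\tfrac{\eta}{6},
$$
which forces $\eta\ge 2$, contradicting $\eta\le 1/2$. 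The main obstacle is precisely the bookkeeping in the second paragraph: $\psi$-robustness gives no information unless the competing measures lie in a uniformly $\psi$-integrating set, and the point of the plan is that the compactness characterisation in Lemma \ref{charact of compact sets} converts any $\psi$-weakly convergent sequence into exactly such a set, making the classical three-way inequality above available.
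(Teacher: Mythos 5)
Your proof is correct and takes essentially the same route as the paper's: both rest on the observation that a $\psi$-weakly convergent sequence together with its limit is compact, hence uniformly $\psi$-integrating by Lemma \ref{charact of compact sets}, so that the $\psi$-robustness hypothesis becomes applicable, and both then invoke weak consistency to transfer the Prohorov bound on the sampling distributions to a bound on $|T(\nu)-T(\mu)|$. The only difference is presentational — the paper argues directly via the triangle inequality for $d_{\mbox{\scriptsize{\rm Proh}}}$ applied to the Dirac measures $\delta_{T(\mu_k)}$ and $\delta_{T(\mu)}$, while you argue by contradiction with an explicit interval computation.
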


\begin{proof}
We must show that $T(\mu_k)\to T(\mu)$ when $(\mu_k)$ is a sequence in $\cM$ that converges $\psi$-weakly to $\mu$.
Given such a sequence $(\mu_k)$, the set ${\cal N}:=\{\mu,\mu_1,\mu_2,\ldots\}$ is clearly compact for the $\psi$-weak topology. By Lemma \ref{charact of compact sets} we conclude that ${\cal N}$ is uniformly $\psi$-integrating. So, given $\varepsilon>0$, the $\psi$-robustness of $T$ at $\mu$ in $\cM$ implies that there are some  $\delta>0$ and $n_0\in\bN$ such that $d_{\text{Proh}}(\bP_{\mu_k}\circ \wh T_n^{-1},\bP_\mu\circ \wh T_n^{-1})\le\eps$ for all $n\ge n_0$ and $k\ge k_0$, where $k_0\in\N$ is chosen such that $d_\psi(\mu_k,\mu)\le\delta$ for all $k\ge k_0$ (recall that $d_\psi$ generates the $\psi$-weak topology). So, assuming without loss of generality $\delta\le\delta_0$, the weak consistency of $(\hat T_n)$ at $\mu_k$ and $\mu$ implies that
\begin{eqnarray*}
    |T(\mu_k)-T(\mu)|
    & = & d_{\mbox{\scriptsize{\rm Proh}}}(\delta_{T(\mu_k)},\delta_{T(\mu)})\\
    & \le & \limsup_{n\to\infty}\Big(d_{\mbox{\scriptsize{\rm Proh}}}(\delta_{T(\mu_k)},\pr_{\mu_k}\circ \hat T_n^{-1})\,+\,\varepsilon\,+\,
    d_{\mbox{\scriptsize{\rm Proh}}}(\pr_{\mu}\circ \hat T_n^{-1},\delta_{T(\mu)})\Big)\\
    & = & \varepsilon
\end{eqnarray*}
for all $k\ge k_0$. This completes the proof.\qed
\end{proof}


\subsection{Skorohod representation  for  $\bm\psi$-weak convergence}\label{Skorohod Section}

The classical Skorohod--Dudley--Wichura representation theorem states that weak convergence $\mu_n\to\mu_0$ is equivalent to the existence of random variables $X_n$ with law $\mu_n$ such that $X_n\to X_0$ almost surely. A question one may ask is whether $\psi$-weak convergence  $\mu_n\to\mu_0$ can be expressed in terms of a stronger  concept for the convergence $X_n\to X_0$. Here we are going to address this question in the context of the Orlicz spaces.

\begin{theorem}\label{SDW}
For any finite Young function $\Psi$ the following two conditions are equivalent.
\begin{enumerate}
\item
    A sequence $(\mu_n)$ in $\cM(H^\Psi)$ converges $\Psi $-weakly to some $\mu_0$ if and only if there exists a sequence  $(X_n)_{n\in\N_0}$ in $H^\Psi$ such that $X_n$ has law $\mu_n$ for each $n\in\N_0$ and $\|X_n -X_0\|_\Psi\to 0$.
\item $\Psi$ satisfies the $\Delta_2$-condition \eqref{Delta2}.
\end{enumerate}
\end{theorem}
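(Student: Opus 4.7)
The plan is to prove the two implications separately; note that (b) $\Rightarrow$ (a) itself requires both directions of the equivalence asserted in (a). For the soft direction, assume $\|X_n - X_0\|_\Psi \to 0$ with $X_n \sim \mu_n$. By \eqref{Proposition 2.1.10 in EdgarSucheston1992}, $\bE[\Psi(k|X_n - X_0|)] \to 0$ for every $k > 0$; in particular $X_n \to X_0$ in probability and hence $\mu_n \to \mu_0$ weakly. Since $X_0 \in H^\Psi$ makes $\Psi(2|X_0|)$ integrable, the convexity estimate $\Psi(|X_n|) \le \tfrac12 \Psi(2|X_0|) + \tfrac12 \Psi(2|X_n - X_0|)$ shows that $\{\Psi(|X_n|)\}$ is uniformly integrable; Vitali's theorem together with the continuity of $\Psi$ then yields $\int\Psi(|x|)\,d\mu_n \to \int\Psi(|x|)\,d\mu_0$, and combined with weak convergence this delivers $\Psi$-weak convergence via the standard characterization of $\psi$-weak topologies (cf.\ Lemma \ref{psi-weak-lemma}). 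Note that this half needs no appeal to $\Delta_2$.

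For the harder direction of (b) $\Rightarrow$ (a), I would use the classical quantile coupling: with $U \sim {\rm Uniform}(0,1)$ on $(\Omega,\cF,\bP)$, set $X_n := F_{\mu_n}^{\leftarrow}(U)$, so $X_n \sim \mu_n$ and $X_n \to X_0$ $\bP$-almost surely. A standard iteration of \eqref{Delta2} produces, for each $k > 0$, constants $C_k, D_k > 0$ with $\Psi(kx) \le C_k \Psi(x) + D_k$ for all $x \ge 0$. From $\Psi$-weak convergence, $\bE[\Psi(|X_n|)] \to \bE[\Psi(|X_0|)] < \infty$, and since $\Psi(|X_n|) \to \Psi(|X_0|)$ almost surely, Scheff\'e's lemma upgrades this to $L^1$-convergence; hence $\{\Psi(|X_n|)\}$, and by the previous bound also $\{\Psi(2k|X_n|)\}$, is uniformly integrable. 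Using $\Psi(k|X_n - X_0|) \le \tfrac12 \Psi(2k|X_n|) + \tfrac12 \Psi(2k|X_0|)$, the family $\{\Psi(k|X_n - X_0|)\}$ is uniformly integrable and converges to $0$ a.s., so Vitali yields $\bE[\Psi(k|X_n - X_0|)] \to 0$ for every $k$, which by \eqref{Proposition 2.1.10 in EdgarSucheston1992} is exactly $\|X_n - X_0\|_\Psi \to 0$. This uniform integrability step, which is where $\Delta_2$ is genuinely indispensable, is the main technical hurdle of the proof.

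For (a) $\Rightarrow$ (b) I would argue by contraposition. If \eqref{Delta2} fails, then for every $n$ there exists some $x_n \ge n$ with $\Psi(2x_n) \ge 2^n \Psi(x_n)$, and for all sufficiently large $n$ we have $\Psi(x_n) > 0$ since $\lim_{x\ua\infty}\Psi(x) = \infty$. Set $p_n := 1/(n\Psi(x_n))$, $\mu_n := (1-p_n)\delta_0 + p_n \delta_{x_n} \in \cM(L^\infty) \subset \cM(H^\Psi)$, and $\mu_0 := \delta_0 \in \cM(H^\Psi)$. Then $\mu_n \to \mu_0$ weakly and $\int \Psi(|x|)\,d\mu_n = 1/n \to 0 = \int \Psi(|x|)\,d\mu_0$, so $\mu_n \to \mu_0$ $\Psi$-weakly. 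However, $\mu_0 = \delta_0$ forces $X_0 = 0$ almost surely in any coupling, whence $\bE[\Psi(2|X_n - X_0|)] = p_n \Psi(2x_n) \ge 2^n/n \to \infty$, so \eqref{Proposition 2.1.10 in EdgarSucheston1992} precludes $\|X_n - X_0\|_\Psi \to 0$ for any choice of couplings. Hence property (a) fails, completing the contrapositive.
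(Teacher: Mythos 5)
Your proof is correct, and its skeleton matches the paper's: the ``if'' half of (a) via the convexity bound $\Psi(|X_n|)\le\tfrac12\Psi(2|X_0|)+\tfrac12\Psi(2|X_n-X_0|)$ plus uniform integrability (no $\Delta_2$ needed, as you rightly note); the ``only if'' half via an a.s.-convergent coupling followed by a uniform-integrability upgrade that is exactly where $\Delta_2$ enters; and a counterexample for (a)$\Rightarrow$(b). The differences are worth recording. In the hard half of (b)$\Rightarrow$(a) you use the quantile coupling $F_{\mu_n}^{\leftarrow}(U)$ where the paper invokes Skorohod representation (interchangeable on $\bR$), and you replace the paper's inductive Lemma 3.5 (uniform integrability of $\Psi(2^m|X_n-X_0|)$, proved by iterating $\Delta_2$ one doubling at a time) by the global bound $\Psi(kx)\le C_k\Psi(x)+D_k$; these are the same estimate packaged differently, and Scheff\'e versus Vitali is likewise a matter of taste. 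The genuine divergence is in (a)$\Rightarrow$(b): the paper first cites \cite[Theorem 2.1.17]{EdgarSucheston1992} to produce $Y\ge0$ with $\bE[\Psi(Y)]<\infty$ but $\bE[\Psi(2Y)]=\infty$, and then builds $X_n=(Y-n)^+\wedge a_n$ with a carefully tuned $a_n$, whereas you construct two-point measures $(1-p_n)\delta_0+p_n\delta_{x_n}$ directly from points $x_n$ witnessing the failure of \eqref{Delta2}. Your construction is more elementary and self-contained (it avoids the $H^\Psi\neq L^\Psi$ characterization entirely and makes the divergence $\bE[\Psi(2|X_n|)]\ge 2^n/n$ transparent), at the mild cost of checking the bookkeeping that $p_n=1/(n\Psi(x_n))\le1$ for large $n$ — which holds since $\Psi(x_n)\ge\Psi(n)\to\infty$, but should be said. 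Both constructions exploit the same fact that the Luxemburg norm of $X_n-X_0$ depends only on $\mu_n$ once $\mu_0=\delta_0$, so no choice of coupling can rescue condition (a).
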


\medskip

For proving Theorem \ref{SDW} we need the following lemma.

\medskip

\begin{lemma}\label{uniformintegrable}
Let $\Psi$ be a finite Young function satisfying the $\Delta_2$-condition \eqref{Delta2} and let $(X_{n})_{n\in\N_{0}}$ be a sequence in $H^\Psi=L^\Psi$. If the sequence $(\Psi(|X_{n}|))_{n\in\N_{0}}$ is uniformly  integrable, then the sequence $(\Psi(2^{m}|X_{n}-X_{0}|))_{n\in\N}$ is also uniformly  integrable for every $m\in\N_{0}$.
\end{lemma}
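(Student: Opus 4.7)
The plan is to reduce everything to a pointwise domination of $\Psi(2^m|X_n-X_0|)$ by a uniformly integrable sequence plus a constant, using convexity of $\Psi$ together with an iterated form of the $\Delta_2$-condition. Recall that under $\Delta_2$ we have $H^\Psi=L^\Psi$ so that $X_0\in L^\Psi$ and in particular $\Psi(|X_0|)\in L^1$.

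The first step is the triangle-type bound. Since $\Psi$ is nondecreasing and convex with $\Psi(0)=0$, we have, for every $\omega$,
\begin{equation*}
\Psi\bigl(2^m|X_n-X_0|\bigr)\le\Psi\bigl(2^m(|X_n|+|X_0|)\bigr)\le\tfrac12\Psi\bigl(2^{m+1}|X_n|\bigr)+\tfrac12\Psi\bigl(2^{m+1}|X_0|\bigr).
\end{equation*}
Thus it suffices to show that $(\Psi(2^{m+1}|X_n|))_{n\in\N}$ is uniformly integrable, since $\Psi(2^{m+1}|X_0|)$ is a single integrable random variable (by $X_0\in L^\Psi=H^\Psi$).

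The second step is the iteration of the $\Delta_2$-condition. By assumption there are constants $C,x_0>0$ with $\Psi(2x)\le C\Psi(x)$ for $x\ge x_0$; for $x\in[0,x_0)$ monotonicity of $\Psi$ gives $\Psi(2x)\le\Psi(2x_0)=:D$. Hence for every $x\ge 0$ we have the global bound $\Psi(2x)\le C\Psi(x)+D$. Iterating $m+1$ times yields constants $C_m,D_m>0$ (depending only on $\Psi$ and $m$) such that
\begin{equation*}
\Psi\bigl(2^{m+1}x\bigr)\le C_m\Psi(x)+D_m\qquad\text{for all }x\ge 0.
\end{equation*}
Applying this to $x=|X_n|$ gives $\Psi(2^{m+1}|X_n|)\le C_m\Psi(|X_n|)+D_m$ pointwise.

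The final step is to invoke standard uniform integrability arguments. Since $(\Psi(|X_n|))_{n\in\N_0}$ is uniformly integrable by hypothesis and the family is dominated up to an additive constant by $C_m\Psi(|X_n|)+D_m$, the sequence $(\Psi(2^{m+1}|X_n|))_{n\in\N}$ is uniformly integrable as well (uniform integrability is preserved under pointwise domination and under addition of a constant, and linear combinations of UI families are UI). Combining this with the convexity bound from the first step shows that $(\Psi(2^m|X_n-X_0|))_{n\in\N}$ is uniformly integrable, which is the claim. There is no real obstacle here; the only point requiring a little care is the extension of the $\Delta_2$-inequality from the regime $x\ge x_0$ to all $x\ge 0$, and this is handled by a single additive constant as above.
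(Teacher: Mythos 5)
Your proof is correct and rests on the same two ingredients as the paper's: the convexity/triangle bound $\Psi(2^m|X_n-X_0|)\le\tfrac12\Psi(2^{m+1}|X_n|)+\tfrac12\Psi(2^{m+1}|X_0|)$ and the $\Delta_2$-inequality made global by the additive constant $\Psi(2x_0)$, iterated over $m$. The only (cosmetic) difference is that you separate $|X_n|$ and $|X_0|$ first and then iterate $\Delta_2$ on $\Psi(2^{m+1}|X_n|)$, whereas the paper runs an induction on $m$ applied directly to $\Psi(2^m|X_n-X_0|)$; both are equally valid.
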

\begin{proof}
By (\ref{Delta2}), we have $C := \sup_{x\geq x_{0}}\Psi(2 x)/\Psi(x) < \infty$ for some $x_{0} > 0.$ We  proceed by induction on $m\in\N_{0}$.

First, let $m=0.$ Since $\Psi$ is nondecreasing and convex with $\Psi(0) = 0$, we obtain by the triangle inequality
\begin{eqnarray*}
    \Psi(|X_{n}- X_{0}|)
    & \le &
    \frac{1}{2}\Big(\Psi(2 |X_{n}|) + \Psi(2 |X_{0}|)\Big) \\
    & = &
    \frac{1}{2}\Big(\Psi\big(2\,\eins_{[0,x_{0}]}(|X_{n}|) |X_{n}|\big) +
    \Psi\big(2\,\eins_{[0,x_{0}]}(|X_{0}|)|X_{0}|\big)\Big)\\
    & & +\,
    \frac{1}{2}\Big(\Psi\big(2\,\eins_{(x_{0},\infty)}(|X_{n}|) |X_{n}|\big) +
    \Psi\big(2\,\eins_{(x_{0},\infty)}(|X_{0}|)|X_{0}|\big)\Big)\\
    & \le &
    \Psi(2 x_{0})+ \,\frac{C}{2}\Big(\Psi(|X_{n}|) + \Psi(|X_{0}|)\Big).
\end{eqnarray*}
Since the sequence $(\Psi(|X_{n}|))_{n\in\N_{0}}$ is uniformly  integrable by assumption, we may thus conclude that the sequence $(\Psi(|X_{n}-X_{0}|))_{n\in\N}$ is uniformly integrable.

Let us now suppose that $(\Psi(2^{m}|X_{n}-X_{0}|))_{n\in\N}$ is uniformly integrable for any given $m\in\N_{0}$. Following an analogous line of reasoning as in the case of $m = 0$, we may find
$$
    \Psi(2^{m+1}|X_{n}-X_{0}|) \,=\, \Psi(2\cdot 2^{m}|X_{n}-X_{0}|) \,\leq\, \Psi(2 x_{0}) + C\,\Psi(2^{m}|X_{n}-X_{0}|).
$$
Hence, $(\Psi(2^{m+1}|X_{n}-X_{0}|))_{n\in\N}$ is uniformly integrable, which completes the proof.
\end{proof}\goodbreak

\noindent{\it Proof of {\rm(b)$\Rightarrow$(a)} in Theorem \ref{SDW}.}
Let us suppose that the $\Delta_2$-condition \eqref{Delta2} holds.

We first prove that $\|X_n -X_0\|_\Psi\to 0$ implies $\mu_n\to\mu_0$ $\Psi $-weakly. By \eqref{Proposition 2.1.10 in EdgarSucheston1992},  $\|X_n -X_0\|_\Psi\to 0$ yields $\bE[\,\Psi(2|X_n-X_0|)\,]\to 0$ and $X_n\to X_0$ in probability. Convexity and monotonicity of $\Psi$ imply that
$$0\le\Psi(|X_n|)\le\frac12\Psi\big(2\big||X_n|-|X_0|\big|\big)+\frac12\Psi(2|X_0|)\le \frac12 \Psi(2|X_n-X_0|)+\frac12\Psi(2|X_0|).
$$
Hence, $\Psi(|X_n|)$ is uniformly integrable, and we obtain that
$$\int\Psi(|x|)\,\mu_n(dx)=\bE[\,\Psi(|X_n|)\,]\longrightarrow\bE[\,\Psi(|X_0|)\,]=\int\Psi(|x|)\,\mu_0(dx).$$
 Moreover, since $X_n\to X_0$ in probability the corresponding laws $(\mu_n)$ converge weakly. Now the $\Psi $-weak convergence $\mu_n\to\mu_0$ follows from Lemma \ref{psi-weak-lemma}\,(iv)$\Rightarrow$(i).

Now we prove that the $\Psi $-weak convergence $\mu_n\to\mu_0$ implies the existence of a sequence $(X_n)$ in $H^\Psi$ such that $\|X_n -X_0\|_\Psi\to 0$. Clearly, $\mu_n\to\mu_0$ weakly. By Skorohod representation there hence exists a sequence of random variables $(X_n)$ such that $X_n\to X_0$ $\bP$-a.s. The  continuity of $\Psi$ and the fact that $\Psi(0)=0$ yield that
\begin{align}
    \Psi(|X_n|) \longrightarrow \Psi(|X_0|)&\qquad \bP\mbox{-a.s.}
\label{SDW - proof - eq - 1} \\
    \Psi(k|X_n-X_0|) \longrightarrow 0&\qquad\bP\mbox{-a.s. for all $k\ge0$.}  \label{SDW - proof - eq - 2}
\end{align}
Moreover, the  $\Psi $-weak convergence   $\mu_n\to\mu_0$ implies that
\begin{equation}\label{SDW - proof - eq - 3}
    \bE[\,\Psi(|X_n|)\,]=\int\Psi(|x|)\,\mu_n(dx)\longrightarrow \int\Psi(|x|)\,\mu_0(dx)=\bE[\,\Psi(|X_0|)\,].
\end{equation}
Now,  \eqref{SDW - proof - eq - 1}, (\ref{SDW - proof - eq - 3}), and Vitali's theorem in the form of \cite[Proposition 3.12\,(ii)$\Rightarrow$(iii)]{Kallenberg1997} imply that the sequence $(\Psi(|X_n|))_{n\in\N_{0}}$ is uniformly  integrable. Applying Lemma  \ref{uniformintegrable} yields the uniform integrability of the sequence $(\Psi(k|X_n-X_0|))_{n\in\bN}$ for every $k>0$. Therefore, \eqref{SDW - proof - eq - 2} and another application of Vitali's theorem, this time in the form of \cite[Proposition 3.12\,(iii)$\Rightarrow$(ii)]{Kallenberg1997}, yield $\bE[\,\Psi(k|X_n-X_0|)\,]\to 0$ for every $k>0$, which implies $\|X_n-X_0\|_\Psi\to 0$  according to \eqref{Proposition 2.1.10 in EdgarSucheston1992}. Finally, the sequence $(X_n)$ belongs to $H^\Psi$, because under the $\Delta_2$-condition $H^\Psi$ coincides with the class of random variables $Y$ with $\bE[\,\Psi(|Y|)\,]<\infty$.
\qed

\medskip

\noindent{\it Proof of {\rm(a)$\Rightarrow$(b)} in Theorem \ref{SDW}.} Let us suppose that condition (a) in Theorem \ref{SDW} holds, but that $\Psi$ does not satisfy the $\Delta_2$-condition \eqref{Delta2}. We will show that this leads to a contradiction. Since $\Psi$ does not satisfy the $\Delta_2$-condition \eqref{Delta2} and our probability space is atomless, we have $H^\Psi\neq L^\Psi$ by \cite[Theorem 2.1.17]{EdgarSucheston1992}. Hence there exists a random variable $Y\ge 0$ such that $\bE[\,\Psi(Y)\,]<\infty$ and $\bE[\,\Psi(2Y)\,]=\infty$. We then choose $a_n>0$ such that
$$
2\bE\Big[\,\Psi\big(2\big(Y\wedge a_n\big)\big)\,\Big]\ge n+\Psi(4n)\qquad\text{for each $n$,}
$$
and let
$$X_n:=(Y-n)^+\wedge a_n.
$$
Then $X_n\in L^\infty$, and hence $X_n\in H^\Psi$ since $\Psi$ is finite. Moreover, $X_n\to X_0:=0~\mathbb{P}-$a.s., and
\begin{equation}\label{mun laws to Dirac0 Psi weakly}
0\le\bE[\,\Psi(|X_n|)\,]=\bE\big[\,\Psi((Y-n)^+\wedge a_n)\,\big]\le\bE\big[\,\Psi((Y-n)^+)\,\big] \longrightarrow 0
\end{equation}
by dominated convergence. It therefore follows from Lemma \ref{psi-weak-lemma} that $\mu_n:=\bP\circ X_n^{-1}\to\delta_0$ in the $\Psi $-weak topology.

We will show next that we cannot have $\|X_n -X_0\|_\Psi=\|X_n\|_\Psi\to 0$. Since any sequence $(\wt X_n)$ for which $\bP\circ\wt X_n^{-1}=\mu_n$ must satisfy $\|\wt X_n\|_\Psi=\|X_n\|_\Psi$, condition (a) in Theorem \ref{SDW} will thus be violated. So let us suppose by way of contradiction that $\|X_n\|_\Psi\to 0$. By  \eqref{Proposition 2.1.10 in EdgarSucheston1992}, this is equivalent to  $\bE[\,\Psi(k|X_n|)\,]\to0$ for every $k>0$. By taking $k=4$ and using the fact that the convex function $\ell(x):=\Psi(4 x^+)$ satisfies $\ell(x-y)\ge2\ell(x/2)-\ell(y)$ we obtain
\begin{eqnarray*}\bE[\,\Psi(4|X_n|)\,]&=&\bE\big[\,\Psi\big(4\big((Y-n)^+\wedge a_n\big)\big)\,\big]\ge\bE\Big[\,\Psi\Big(4\big(Y\wedge(a_n+n)-n\big)^+\Big)\,\Big]\\
&\ge&2\bE\Big[\,\Psi\Big(2\big(Y\wedge(a_n+n)\big)\Big)\,\Big]-\Psi(4n)\\
&\ge& 2\bE\Big[\,\Psi\big(2\big(Y\wedge a_n\big)\big)\,\Big]-\Psi(4n)\ge n,
\end{eqnarray*}
by construction. This is the desired contradiction. \qed

\section{Proofs of the results from Section \ref{main results section}}\label{Proof section}

\noindent{\it Proof of Theorem \ref{Consistency thm}.} Since $\bE[\,\Psi(k|X|)\,]<\infty$ for each $k>0$, Birkhoff's ergodic theorem (e.g., in the form of \cite[Theorem 6.28]{Breiman}) implies that for each $k>0$ and $\bP$-a.e. $\omega\in\Omega$
\begin{equation}\label{lln for Psi(k)}
\int \Psi(k|x|)\,\wh m_n(\omega)(dx)=\frac1n\sum_{i=1}^n\Psi(k|X_i(\omega)|)\longrightarrow\bE[\,\Psi(k|X|)\,]=\int\Psi(k|x|)\,\mu(dx),
\end{equation}
where  $\mu:=\bP\circ X^{-1}$. Moreover,  for $\bP$-a.e. $\omega\in\Omega$
\begin{equation}\label{lln for emp dist}
\wh m_n(\omega)\longrightarrow\mu\qquad\text{weakly,}
\end{equation}
due to Birkhoff's ergodic theorem and an application of \cite[Theorem 6.6]{Parthasarathy}.
Hence there exists a measurable set $\Omega_0\in\cF$ such that $\bP[\,\Omega_0\,]=1$ and such that for each $\omega\in\Omega_0$  \eqref{lln for emp dist} is satisfied and \eqref{lln for Psi(k)} holds for each $k\in \bN$. Let us fix ${\omega_0}\in\Omega_0$.  Since our probability space is atomless, standard Skorohod representation yields the existence of random variables $X^{\omega_0}$,  $(X_n^{\omega_0})$  such that $X^{\omega_0}$ has law $\mu$, $X_n^{\omega_0}$ has law $\wh m_n({\omega_0})$, and $X_n^{\omega_0}\to X^{\omega_0}$ $\bP$-a.s.  By \eqref{lln for Psi(k)},
for each $k\in\bN$,
\begin{equation}\label{convergence for k}
\bE[\,\Psi(k|X^{\omega_0}_n|)\,]=\int \Psi(k|x|)\,\wh m_n({\omega_0})(dx)\longrightarrow \int\Psi(k|x|)\,\mu(dx)=\bE[\,\Psi(k|X^{\omega_0}|)\,].
\end{equation}
Therefore the sequence $(\Psi(k|X^{\omega_0}_n|))$ is uniformly integrable for each $k\in\bN$.

Now take $a>0$ and pick $k\in\bN$ such that $k\ge 2a$. Since $\Psi$ is convex and nondecreasing, we have
$$0\le  \Psi(a|X^{\omega_0}_{n}- X^{\omega_0}|)
     \le
    \frac{1}{2}\Big(\Psi(2a |X_{n}^{\omega_0}|) + \Psi(2a |X^{\omega_0}|)\Big)\le  \frac{1}{2}\Big(\Psi(k |X_{n}^{\omega_0}|) + \Psi(k |X^{\omega_0}|)\Big).
$$
It follows that the sequence $(\Psi(a|X^{\omega_0}_{n}- X^{\omega_0}|))$ is uniformly integrable. Since clearly $\Psi(a|X^{\omega_0}_{n}- X^{\omega_0}|)\to0$ $\bP$-a.s., we get that $\bE[\,\Psi(a|X^{\omega_0}_{n}- X^{\omega_0}|)\,]\to0$ for each $a>0$ and in turn that $\|X^{\omega_0}_{n}- X^{\omega_0}\|_\Psi\to0$ due to \eqref{Proposition 2.1.10 in EdgarSucheston1992}. By \cite[Theorem 4.1]{CheriditoLi2009}   $\rho$ is continuous with respect to the Luxemburg norm $\|\cdot\|_{\Psi}$, and so
$$\wh\rho_n({\omega_0})=\cR_\rho(\wh m_n({\omega_0}))=\rho(X_n^{\omega_0})\longrightarrow\rho(X^{\omega_0})=\rho(X)
$$
for each ${\omega_0}\in\Omega_0$. \qed

\bigskip

\noindent{\it Proof of Theorem \ref{continuity of convex risk measures}.}
We first prove the implication (b)$\Rightarrow$(a) in Theorem \ref{continuity of convex risk measures}. So let us assume that $\Psi$ satisfies the $\Delta_2$-condition \eqref{Delta2} and  let $\rho$ be a convex risk measure on $H^\Psi$ with associated map $\cR_\rho$. It suffices to show sequential continuity of $\cR_\rho$ since the
$\psi$-weak topology is metrizable; cf.\ \cite[Corollary A.45]{FoellmerSchied2011}. So let us choose a sequence $(\mu_n)$ such that $\mu_n\to\mu_0$ $\Psi $-weakly.  By Theorem   \ref{SDW} there exists a sequence $(X_n)_{n\in\N_0}$ in $H^\Psi$ such that each $X_n$ has law $\mu_n$ and such that $\|X_n- X_0\|_\Psi\to0$.   But it was shown in \cite[Theorem 4.1]{CheriditoLi2009}  that $\rho$ is continuous with respect to the Luxemburg norm $\|\cdot\|_{\Psi}$ (see also \cite[Proposition 3.1]{RuSha}). Therefore,
$$\cR_\rho(\mu_n)=\rho(X_n)\longrightarrow\rho(X_0)=\cR_\rho(\mu_0),
$$
which proves the implication (b)$\Rightarrow$(a).

We now prove the implication (a)$\Rightarrow$(b) in Theorem \ref{continuity of convex risk measures}. This proof is similar to the proof of {\rm(a)$\Rightarrow$(b)} in Theorem \ref{SDW}. We assume that $\Psi$ does not satisfy the $\Delta_2$-condition \eqref{Delta2}, and we will construct a risk measure $\rho$ for which $\cR_\rho$ is not $\Psi $-weakly continuous.  This risk measure is given as the utility-based shortfall risk measure \eqref{utility-based shortfall risk measure} with convex loss function $\ell(x):=\Psi(8x^+)$. It follows as in \eqref{utility-based shortfall risk measure finite} that $\bE[\,\ell(-X-m)\,]$ is finite and well-defined for  $m\in\bR$ and $X\in H^\Psi$.

Since $\Psi$ does not satisfy the $\Delta_2$-condition \eqref{Delta2} and our probability space is atomless, we have $H^\Psi\neq L^\Psi$ by \cite[Theorem 2.1.17]{EdgarSucheston1992}. Hence there exists a random variable $Y\ge 0$ such that $\bE[\,\Psi(Y)\,]<\infty$ and $\bE[\,\Psi(2Y)\,]=\infty$. We then choose $a_n>0$ such that
\begin{equation}\label{an choice2}
4\bE\Big[\,\Psi\big(2\big(Y\wedge a_n\big)\big)\,\Big]\ge n+\ell(n/2)\qquad\text{for each $n$,}
\end{equation}
and let
$$X_n:=-\big((Y-n)^+\wedge a_n\big).
$$
Then $X_n\in L^\infty$ and hence $X_n\in H^\Psi$ since $\Psi$ is finite. As in \eqref{mun laws to Dirac0 Psi weakly} we get
$\bE[\,\Psi(|X_n|)\,]\to 0$, and so  $\mu_n:=\bP\circ X_n^{-1}\to\delta_0$ in the $\Psi $-weak topology. We now prove that the sequence $z_n:=\rho(X_n)=\cR_\rho(\mu_n)$ is unbounded, which will imply that $\cR_\rho$ is not continuous for the $\Psi $-weak topology.

To prove that the sequence $(z_n)$ is unbounded, we assume by way of contradiction that $z^*:=\sup_nz_n<\infty$. We see from \eqref{utility-based shortfall risk measure finite} and dominated convergence  that each $z_n=\rho(X_n)$ solves the equation $\bE[\,\ell(-X_n-z_n)\,]=1$. The convexity of $\ell$ implies that
$\ell(x-y)\ge2\ell(x/2)-\ell(y)$. Hence,
\begin{eqnarray*}
1&=&\bE[\,\ell(-X_n-z_n)\,]\ge \bE[\,\ell(-X_n-z^*)\,]\ge 2\bE\Big[\,\ell\Big(\frac12\big((Y-n)^+\wedge a_n\big)\Big)\,\Big]-\ell(z^*)\\
&\ge&2\bE\Big[\,\ell\Big(\frac12\big(Y\wedge (a_n+n)\big)-\frac n2\Big)\,\Big]-\ell(z^*)\\
&\ge& 4\bE\Big[\,\ell\Big(\frac14\big(Y\wedge a_n\big)\Big)\,\Big]-\ell(n/2)-\ell(z^*) \\
&=&4\bE\Big[\,\Psi\big(2\big(Y\wedge a_n\big)\big)\,\Big]-\ell(n/2)-\ell(z^*).
\end{eqnarray*}
But according to \eqref{an choice2}, the expression on the right is bounded from below by $n-\ell(2z^*)$, which yields the desired contradiction. \qed
\medskip

\noindent{\it Proof of Theorem \ref{extension}.}
The equivalence between conditions  (a) and (b)  follows from Theorem \ref{continuity of convex risk measures}. The implication (b)$\Rightarrow$(c) simply follows from the fact that $\rho$ is equal to the restriction of $\bbar\rho$ to $L^\infty$.

To prove (c)$\Rightarrow$(d), we first note that
$\mu_n:=\bP\circ X_n\in\cM(L^\infty)$ when $(X_n)$ is a sequence as in (d). Moreover,  $\|X_n\|_ \Psi\to0$ implies that $\mu_n\to\delta_0$ $ \Psi $-weakly. So it is now clear that (c) implies (d).

We now prove
(d)$\Rightarrow $(a). To this end, we will apply \cite[Theorem 4.3]{CheriditoLi2009}, which states that $\bbar\rho$ is finite on $H^ \Psi$ when $0$ belongs to the topological interior of the effective domain
of the map $\bbar\rho:H^ \Psi\to\bR\cup\{+\infty\}$. One can apply Proposition 2.18 and the subsequent remark in \cite{Farkasetal} to get the same implication when $\rho$ is not cash-additive and only cash-coercive as in Remark \ref{cash coercivity rem}.  We will therefore show that $\rho$ is finite on the centered $\eps$-ball $B_\eps:=\{X\in H^ \Psi\,:\,\|X\|_ \Psi<\eps\}$ when $\eps>0$ is small enough.

Suppose that $(X_n)$ is a sequence in $L^\infty$ such that $\|X_n\|_\Psi\to0$. Then we have $\mu_n:=\bP\circ X_n^{-1}\to\delta_0$ $\Psi$-weakly, and so $\rho(X_n)=\cR_\rho(\mu_n)\to\cR_\rho(\delta_0)=\rho(0)$.
 Thus, $\rho:L^\infty\to\bR$ is continuous with respect to $\|\cdot\|_\Psi$ at $0$. Hence, for $K>0$ given, there exists $\eps>0$ such that
$
\rho(X)\leq K$ for $X\in B_{\varepsilon}\cap L^{\infty}$.
Now let us fix $X\in B_\eps$. The negative part $X^{-}$ belongs again to $B_\eps$, and monotone convergence yields $X^-\wedge k\to X^-$ in ${L^1}$.  Using the lower semicontinuity of $\bbar\rho:L^1\to\bR\cup\{+\infty\}$ hence gives,
$$\bbar\rho(X)\le\bbar\rho(-X^-)\le\liminf_{k\ua\infty}\bbar\rho(-X^-\wedge k)=\liminf_{k\ua\infty}\rho(-X^-\wedge k)\le K.
$$
Here we have also used the monotonicity of $\bbar\rho$ in the first and the fact that $-(X^-\wedge k)$ belongs to $B_\eps\cap L^\infty$ in the final step. \qed

\bigskip

\noindent{\it Proof of Proposition \ref{psi linear growth Prop}.} We prove the assertion by way of contradiction. So let $\psi:\bR_+\to(0,\infty)$ be a nondecreasing function such that $\liminf_{x\ua\infty}\psi(x)/x=0$ and suppose  that  $\cR_\rho$ is $\psi(|\cdot|)$-robust on $\cM(L^\infty)$. By Theorem \ref{Consistency thm}, $\cR_\rho$ is strongly consistent at each $\mu\in\cM(L^\infty)$. Hence Theorem \ref{hampel-huber conversed} and the robustness of $\cR_\rho$ imply the continuity of $\cR_\rho$ on $\cM(L^\infty)$ with respect to $\psi(|\cdot|)$-weak convergence. As in the proofs of Theorems  \ref{SDW} and \ref{continuity of convex risk measures} we will construct a sequence $(\mu_n)\subset\cM(L^\infty)$ that converges $\psi(|\cdot|)$-weakly to $\delta_0$ but for which $\cR_\rho(\mu_n)\not\to\cR_\rho(\delta_0)$. To this end, we easily construct a random variable $Y\ge0$ such that $\bE[\,\psi(Y)\,]<\infty$ and $\bE[\,Y\,]=\infty$ and pick $a_n>0$ such that $\bE[\,Y\wedge a_n\,]\ge 2n$. Then $X_n:=(Y-n)^+\wedge a_n\to0$ $\bP$-a.s. and $\bE[\,\psi(X_n)\,]\to0$ by dominated convergence. Hence, $\mu_n:=\bP\circ (-X_n)^{-1}$ converge $\psi(|\cdot|)$-weakly to $\delta_0$ by Lemma \ref{psi-weak-lemma}.  However, $\bE[\,X_n\,]\ge\bE[\,Y\wedge a_n\,]-n\ge n$. Now \cite[Lemma 2.3]{Schied2004}  yields that
$$\cR_\rho(\mu_n)=\rho(-X_n)\ge\rho(\bE[\,-X_n\,])\ge\rho(- n),
$$
which shows that we cannot have  $\cR_\rho(\mu_n)\to\cR_\rho(\delta_0)$.\qed

\bigskip

\noindent{\it Proof of Theorem \ref{robust corollary}.}
(a)$\Rightarrow$(b): By Theorem \ref{Consistency thm}, $\cR_\rho$ is strongly consistent at each $\mu\in\cM(H^\Psi)$. Hence Theorem \ref{hampel-huber conversed} and the robustness of $\cR_\rho$ imply the continuity of $\cR_\rho$ on $\cM(H^\Psi)$ with respect to $\Psi$-weak convergence.  Thus,  due to Theorem \ref{continuity of convex risk measures}, $\Psi$ must satisfy the $\Delta_2$-condition \eqref{Delta2}.

(b)$\Rightarrow$(a): By Theorem \ref{continuity of convex risk measures}, $\cR_\rho$ is a continuous map on $\cM(H^\Psi)=\cM_1^\Psi$. Its $\Psi$-robustness on $\cM_1^\Psi$ therefore follows from Theorem \ref{hampel theorem}.
\qed

\bigskip

\noindent{\it Proof of Theorem \ref{Linfty robust thm}.}
The implication (a)$\Rightarrow$(b) is obvious.

(b)$\Rightarrow$(c): First, we note again that $\cR_\rho$ is strongly consistent on $\cM(L^\infty)$ by Theorem \ref{Consistency thm}. Therefore, Theorem \ref{hampel-huber conversed} and the robustness of $\cR_\rho$ imply the $\Psi $-weak continuity of $\cR_\rho$ on $\cM(L^\infty)$. Theorem \ref{extension} now yields (c).

(c)$\Rightarrow$(a):
Condition (c) implies that $\bbar\rho$ is a convex risk measure on $H^\Psi$. Hence, (a) follows by applying Theorem \ref{robust corollary}. \qed

\bigskip

\goodbreak\noindent{\it Proof of Proposition \ref{distortion iqr Prop}.} First, when $g$ is continuous we have $g(0+)=0$ and in the \lq spectral\rq\ representation  \eqref{spectral representation} the part containing  the essential supremum vanishes. Moreover, it follows from \cite{FilipovicSvindland} that  \eqref{spectral representation} remains true for $\bbar\rho_g$ and $X\in L^1$. Next, the function $f(t):=g'_+(1-t)$ is nondecreasing, and we have
$$\int_0^1\vatr_t(X) g'_+(t)\,dt=\int_0^1q_{-X}(t)f(t)\,dt,
$$
where $q_{-X}$ is a quantile function for $-X$.

Let us now show the implication (c)$\Rightarrow$(b). To this end, suppose that $f\in L^{\Psi^*}[0,1]$. 
For any $X\in L^\Psi$ we have $q_{-X}\in L^\Psi[0,1]$ because  under the Lebesgue measure on $[0,1]$, $q_{-X}$ has the same law as $-X$ under $\bP$. Thus by \cite[Proposition 2.2.7]{EdgarSucheston1992}, we get that 
\begin{eqnarray*}
\bbar\rho_g(X)=\int_0^1q_{-X}(t)f(t)\,dt\le 2 \|q_{-X}\|_{L^\Psi[0,1]}\|f\|_{ L^{\Psi^*}[0,1]}=2\|X\|_{L^\Psi}\|f\|_{ L^{\Psi^*}[0,1]}<\infty.
\end{eqnarray*}
Here $\|\cdot\|_{L^\Psi[0,1]}$ and $\|\cdot\|_{L^{\Psi^{*}}[0,1]}$ stand for the Luxemburg norms on $L^\Psi[0,1]$ and $L^{\Psi^{*}}[0,1]$ respectively.

Condition (b) trivially implies (a). So it remains to show that (a) implies (c). To this end, 
we assume that $\bbar\rho_g$ is finite on $H^\Psi$.  Since our probability space is atomless, it supports  a random variable $U$ with uniform distribution on $(0,1)$.  We will show that $Y:=f(U)$ belongs to $L^{\Psi^*}$, which in turn implies (c) since under the Lebesgue measure on $[0,1]$, $f$ has the same distribution as $Y$ under $\bP$. By \cite[Theorem 2.2.11]{EdgarSucheston1992}, $L^{\Psi^*}$ is the topological dual of the Banach space $H^\Psi$.   According to the Banach--Steinhaus theorem (or \cite[Proposition 2.2.7 with Corollary 2.2.10]{EdgarSucheston1992}) we thus have $Y\in L^{\Psi^*}$ if and only if $\bE[\, (-X)Y\,]<\infty$ for all $X\in H^\Psi$.  But for $X\in H^\Psi$, the fact that $f$ is a quantile function for $Y$ and the upper Hardy--Littlewood inequality (e.g., \cite[Theorem A.24]{FoellmerSchied2011}) imply that
$$\infty>\bbar\rho_g(X)=\int_0^1q_{-X}(t)f(t)\,dt=\int_0^1q_{-X}(t)q_Y(t)\,dt\ge \bE[\, (-X)Y\,].
$$
This concludes the proof.
\qed


\appendix
\section{Auxiliary results on the $\psi$-weak topology}\label{psi weak Appendix}

First recall from \cite[Corollary A.45]{FoellmerSchied2011} that the $\psi$-weak topology on $\cM_1^\psi(\bR)$ is separable and metrizable.
The following lemma provides some useful characterizations of the $\psi$-weak convergence; see \cite[Lemma 3.4]{Kraetschmer et al 2012} for a proof.

\begin{lemma}\label{psi-weak-lemma}
The following statements are equivalent:
\begin{itemize}
    \item[(i)] $\mu_n\to\mu$ $\psi$-weakly.
    \item[(ii)] $\int f\,d\mu_n\to\int f\,d\mu$ for every $f\in C_\psi(\R)$.
    \item[(iii)] $\int f\,d\mu_n\to\int f\,d\mu$ for every continuous $f$ with compact support and for $f=\psi$.
    \item[(iv)] $\mu_n\to\mu$ weakly and $\int \psi\,d\mu_n\to\int \psi\,d\mu$.
\end{itemize}
\end{lemma}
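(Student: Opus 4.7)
The plan is to prove the cycle (i)$\Leftrightarrow$(ii), (ii)$\Rightarrow$(iii), (iii)$\Rightarrow$(iv), and (iv)$\Rightarrow$(ii), so that all four conditions become equivalent. The first equivalence is essentially built into the definition. Since the $\psi$-weak topology is by construction the coarsest topology for which each mapping $\mu\mapsto\int f\,d\mu$ with $f\in C_\psi(\R)$ is continuous, a neighborhood base at $\mu$ is given by finite intersections of sets of the form $\{\nu:|\int f\,d\nu-\int f\,d\mu|<\varepsilon\}$. Hence convergence in this topology, i.e.\ (i), is exactly convergence of $\int f\,d\mu_n$ to $\int f\,d\mu$ for every $f\in C_\psi(\R)$, i.e.\ (ii). The implication (ii)$\Rightarrow$(iii) is trivial: every continuous function with compact support lies in $C_\psi(\R)$, and the weight function $\psi$ itself lies in $C_\psi(\R)$ because $\sup_{x\in\R}|\psi(x)/(1+\psi(x))|\le 1$.

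For (iii)$\Rightarrow$(iv) the key step is to extract tightness of $(\mu_n)$. Fix $K$ so that $\psi\ge 1$ on $\{|x|>K\}$, and for $R>K$ pick a continuous cutoff $\chi_R:\R\to[0,1]$ with $\chi_R\equiv 1$ on $[-R,R]$ and support in $[-R-1,R+1]$. Then $\chi_R\psi$ is continuous with compact support, so
$$\int(1-\chi_R)\psi\,d\mu_n=\int\psi\,d\mu_n-\int\chi_R\psi\,d\mu_n\;\longrightarrow\;\int(1-\chi_R)\psi\,d\mu$$
by (iii). Since the support of $1-\chi_R$ sits inside $\{|x|>R\}\subset\{|x|>K\}$, one has $1_{\{|x|>R+1\}}\le 1-\chi_R\le(1-\chi_R)\psi$ pointwise, and hence $\mu_n(\{|x|>R+1\})\le\int(1-\chi_R)\psi\,d\mu_n$. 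The right-hand side converges to $\int(1-\chi_R)\psi\,d\mu$, which tends to $0$ as $R\to\infty$ by dominated convergence. This yields tightness of $(\mu_n)$; combined with the convergence against $C_c(\R)$ supplied by (iii), the Portmanteau theorem gives weak convergence $\mu_n\to\mu$. Since (iii) also contains $\int\psi\,d\mu_n\to\int\psi\,d\mu$ as a separate assertion, (iv) follows.

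Finally, (iv)$\Rightarrow$(ii) is a truncation argument. Given $f\in C_\psi(\R)$, fix $C>0$ with $|f|\le C(1+\psi)$ and use the cutoffs $\chi_R$ as above. The function $f\chi_R$ is bounded and continuous, so weak convergence gives $\int f\chi_R\,d\mu_n\to\int f\chi_R\,d\mu$. The tail error is controlled by
$$\Big|\int f(1-\chi_R)\,d\mu_n\Big|\le C\mu_n(\{|x|>R\})+C\int(1-\chi_R)\psi\,d\mu_n.$$
The first summand has $\limsup_n\mu_n(\{|x|>R\})\le\mu(\{|x|\ge R\})$ by Portmanteau, which is small for large $R$. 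For the second, $\chi_R\psi$ is bounded and continuous, so the weak convergence of $(\mu_n)$ together with $\int\psi\,d\mu_n\to\int\psi\,d\mu$ yields $\int(1-\chi_R)\psi\,d\mu_n\to\int(1-\chi_R)\psi\,d\mu$, which is also small for large $R$. Taking $R$ large and then $n$ large gives (ii).

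The main obstacle I anticipate is the tightness step in (iii)$\Rightarrow$(iv), where one must convert the convergence of a single scalar $\int\psi\,d\mu_n$ into uniform tail estimates on the probability measures themselves. The argument hinges on the growth condition $\psi\ge 1$ outside a compact set, which lets indicator functions of far-away sets be dominated pointwise by $(1-\chi_R)\psi$; once this comparison is in place, (iii) feeds in the required limit. The remaining implications are standard truncation and approximation manipulations.
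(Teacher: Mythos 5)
Your proof is correct and complete. Note that the paper itself contains no argument for this lemma: it simply cites Lemma 3.4 of \cite{Kraetschmer et al 2012}, so there is no inline proof to compare against; your self-contained cutoff/truncation argument is the standard route and is in the spirit of that reference. The chain (i)$\Leftrightarrow$(ii), (ii)$\Rightarrow$(iii)$\Rightarrow$(iv)$\Rightarrow$(ii) is logically sufficient; the identification of sequential convergence in the initial topology with (ii) is legitimate since the $\psi$-weak topology has the finite intersections of the sets $\{\nu:|\int f\,d\nu-\int f\,d\mu|<\eps\}$ as a neighborhood base at $\mu$; the pointwise comparison $\eins_{\{|x|>R+1\}}\le 1-\chi_R\le(1-\chi_R)\psi$ is valid precisely because $R>K$ and $\psi\ge1$ off a compact set; and the dominated-convergence step as $R\to\infty$ uses $\psi\in L^1(\mu)$, which holds because the lemma lives on $\cM_1^\psi$.

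Two minor points worth tightening in the write-up, neither of which affects correctness. First, in (iii)$\Rightarrow$(iv) your limit relation only yields $\limsup_n\mu_n(\{|x|>R+1\})\le\int(1-\chi_R)\psi\,d\mu$; to claim tightness of the whole sequence you should add that the finitely many initial measures are individually tight (every finite Borel measure on $\R$ is), or simply bypass tightness: for bounded continuous $f$, estimate $|\int f\,d\mu_n-\int f\chi_R\,d\mu_n|\le\|f\|_\infty\,\mu_n(\{|x|>R\})$ and run exactly the $\eps/3$ argument you already use in (iv)$\Rightarrow$(ii), with $\psi$ there replaced by $1$. Second, the step ``convergence against compactly supported continuous functions plus tail control implies weak convergence'' is the vague-to-weak upgrade rather than the Portmanteau theorem proper; Portmanteau is what you correctly invoke in (iv)$\Rightarrow$(ii) for the closed set $\{|x|\ge R\}$.
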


\medskip





The following lemma gives a transparent characterization of the $\psi$-weakly compact subsets of ${\cal M}_1^\psi$. Recall that a set ${\cal N}\subset {\cal M}_1^\psi$ is called uniformly $\psi$-integrating if it satisfies (\ref{def uniformly psi integrating - eq}).

\medskip

\begin{lemma}\label{charact of compact sets}
A set ${\cal N}\subset {\cal M}_1^\psi$ is relatively compact for the $\psi$-weak topology if and only if there exists a measurable function $\phi:\mathbb{R}\to[0,\infty)$ such that  $\phi(x)/\psi(x)\to\infty$ as $|x|\to\infty$ and such that
\begin{equation}\label{charact of compact sets - eq}
    \sup_{\nu\in{\cal N}}\int\phi\,d\nu<\infty.
\end{equation}
In this case, ${\cal N}$ is uniformly $\psi$-integrating.
\end{lemma}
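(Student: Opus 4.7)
The plan is to prove the two implications separately, establishing the final claim on uniform $\psi$-integrability along the way. Writing $u_\cN(M):=\sup_{\nu\in\cN}\int_{\{\psi\ge M\}}\psi\,d\nu$, uniform $\psi$-integrability is just $u_\cN(M)\to 0$ as $M\to\infty$.

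For the ``if'' direction, I first verify uniform $\psi$-integrability directly: the hypothesis $\phi/\psi\to\infty$ at infinity together with continuity of $\psi$ (which keeps $\psi$ bounded on compacta, so $\{\psi\ge M\}\subset\{|x|\ge R_M\}$ with $R_M\ua\infty$) yields, for any prescribed $K$, an $M$ for which $\phi\ge K\psi$ on $\{\psi\ge M\}$, whence $u_\cN(M)\le K^{-1}\sup_\nu\int\phi\,d\nu$. To get $\psi$-weak relative compactness, observe that $\phi\to\infty$ at infinity (since $\psi\ge 1$ outside a compact set), so Markov's inequality gives tightness of $\cN$; any sequence in $\cN$ then has a weakly convergent subsequence $\nu_n\to\nu$, and I upgrade to $\psi$-weak convergence via Lemma \ref{psi-weak-lemma}\,(iv) by decomposing $\psi=(\psi\wedge M)+(\psi-M)^+$ and using uniform $\psi$-integrability to control the tails.

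For the ``only if'' direction, assume $\cN$ is $\psi$-weakly relatively compact. I first show $u_\cN(M)\to 0$: otherwise there exist $\eps>0$, $M_n\ua\infty$ and $\nu_n\in\cN$ with $\int_{\{\psi\ge M_n\}}\psi\,d\nu_n\ge\eps$. Passing to a $\psi$-weakly convergent subsequence $\nu_n\to\nu$, for every fixed $M$ the splitting $\psi=\psi\wedge M+(\psi-M)^+$ combined with $\int\psi\,d\nu_n\to\int\psi\,d\nu$ and weak convergence of the bounded continuous piece gives $\int(\psi-M)^+\,d\nu_n\to\int(\psi-M)^+\,d\nu$, while the pointwise estimate $\psi\le 2(\psi-M)^+$ on $\{\psi\ge 2M\}$ forces $\eps\le 2\int(\psi-M)^+\,d\nu_n$ once $M_n\ge 2M$; letting $n\to\infty$ and then $M\to\infty$ contradicts dominated convergence against $\nu$.

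The main obstacle is the construction of $\phi$, which I carry out in the spirit of de la Vall\'ee-Poussin's criterion by separately controlling large $\psi$-values and the spatial tails. Using uniform $\psi$-integrability, pick $M_k\ua\infty$ with $u_\cN(M_k)\le 2^{-k}$; since the $\psi$-weak topology is finer than the weak topology, Prohorov's theorem gives tightness of $\cN$, so one can also pick $a_n\ua\infty$ with $\sup_{\nu\in\cN}\nu(\{|x|>a_n\})\le 2^{-n}$. Define
$$
\phi(x) := \sum_{k=1}^\infty \psi(x)\,\eins_{\{\psi(x)\ge M_k\}} + \sum_{n=1}^\infty \eins_{\{|x|>a_n\}}.
$$
The uniform bound $\sup_{\nu\in\cN}\int\phi\,d\nu\le\sum_k u_\cN(M_k)+\sum_n 2^{-n}<\infty$ is immediate. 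For $\phi(x)/\psi(x)\to\infty$ as $|x|\to\infty$, argue along an arbitrary sequence $(x_m)$ with $|x_m|\to\infty$: after extraction either $\psi(x_m)\to\infty$, in which case the first sum at $x_m$ (the number of $M_k\le\psi(x_m)$) grows without bound, or $\psi(x_m)$ remains bounded, in which case the second sum at $x_m$ (the number of $a_n<|x_m|$) diverges and division by the bound still forces the ratio to infinity.
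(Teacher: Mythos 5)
Your proof is correct, but it follows a genuinely different route from the paper's for the first assertion: the paper disposes of the compactness equivalence in one line by citing Corollary A.47 of F\"ollmer--Schied \cite{FoellmerSchied2011} (a $\psi$-weak analogue of the Prohorov/de la Vall\'ee-Poussin criterion), and only writes out the argument for the final claim that relative compactness implies the uniformly $\psi$-integrating property --- an argument that coincides with your ``if''-direction estimate $u_{\cal N}(M)\le K^{-1}\sup_\nu\int\phi\,d\nu$. What you add is essentially a self-contained proof of the cited corollary: the subsequence extraction with the splitting $\psi=(\psi\wedge M)+(\psi-M)^+$ to pass between weak and $\psi$-weak convergence, the contradiction argument showing that $\psi$-weak relative compactness forces $u_{\cal N}(M)\to0$, and the explicit de la Vall\'ee-Poussin-type construction $\phi=\sum_k\psi\,\eins_{\{\psi\ge M_k\}}+\sum_n\eins_{\{|x|>a_n\}}$, whose two sums separately handle large $\psi$-values and spatial tails (the latter being needed precisely because $\psi$ may stay bounded along some sequence $|x_m|\to\infty$). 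All steps check out; the only point you gloss over is that the weak limit $\nu$ of a subsequence in the ``if'' direction actually lies in ${\cal M}_1^\psi$, which follows immediately from lower semicontinuity of $\mu\mapsto\int\psi\,d\mu$ under weak convergence (or monotone convergence applied to your truncations $\psi\wedge M$), and the implicit use of metrizability of the $\psi$-weak topology to identify relative compactness with sequential relative compactness, which the paper's appendix supplies. The paper's approach is shorter by outsourcing the hard part; yours buys a complete, reference-free argument.
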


\begin{proof}
The first statement is an immediate consequence of Corollary A.47 in \cite{FoellmerSchied2011}. For bounded $\psi$, the second statement is trivial. To prove the second statement for unbounded $\psi$, we assume without loss of generality that $\phi>0$. Fix $\varepsilon>0$, and denote by $K$ the left-hand side of (\ref{charact of compact sets - eq}). Choosing $M_1>0$ so large so that $\psi(x)/\phi(x)\le\varepsilon/K$ when $|x|\ge M_1$, and choosing $M_0>0$ so large so that $\psi(x)\ge M_0$ implies $|x|\ge M_1$, we obtain
\begin{eqnarray*}
    \sup_{\nu\in{\cal N}}\int\psi(x)\eins_{\{\psi(x)\ge M\}}\,\nu(dx)
    & = & \sup_{\nu\in{\cal N}}\int\phi(x)\,\frac{\psi(x)}{\phi(x)}\,\eins_{\{\psi(x)\ge M\}}\,\nu(dx)\\
    & \le & \frac{\varepsilon}{K}\,\sup_{\nu\in{\cal N}}\int\phi(x)\,\nu(dx)\\
    & = & \varepsilon
\end{eqnarray*}
for all $M\ge M_0$. That is, (\ref{def uniformly psi integrating - eq}) holds.\qed
\end{proof}

%

\noindent{\bf Acknowledgement.}
The authors thank Freddy Delbaen, Paul Embrechts, Marco Frittelli, and two anonymous referees for comments, which helped to improve a previous draft of the paper.


\parskip-0.5em\renewcommand{\baselinestretch}{0.8}\small


\end{document}